\documentclass[a4paper,UKenglish]{lipics-v2018}
\usepackage{algorithm,algorithmicx,algpseudocode}
\usepackage{microtype}
\usepackage{tikz}
\usetikzlibrary{patterns}
\usepackage{thmtools,thm-restate}
\usepackage{xspace}

\theoremstyle{plain}
\newtheorem{fact}[theorem]{Fact}

\newcommand{\D}{\mathcal{D}}
\newcommand{\pred}{\mathsf{pred}}
\newcommand{\successor}{\mathsf{succ}}
\newcommand{\Oh}{\mathcal{O}}
\newcommand{\PrefixSearch}{\textsf{PrefixSearch}\xspace}
\newcommand{\Suff}{\mathrm{Suf}}
\newcommand{\eps}{\varepsilon}

\nolinenumbers

\title{Fast entropy-bounded string dictionary look-up with mismatches}
\author{Pawe\l{} Gawrychowski}{University of Wrocław, Wrocław, 50-137, Poland}{gawry@cs.uni.wroc.pl}{}{}
\author{Gad M. Landau}{University of Haifa, Haifa, 3498838, Israel}{landau@cs.haifa.ac.il}{}{}
\author{Tatiana Starikovskaya}{DIENS, \'{E}cole normale sup\'{e}rieure, PSL Research University, Paris, 75005, France}{tat.starikovskaya@gmail.com}{}{}

\titlerunning{Fast string dictionary look-up with mismatches}
\authorrunning{P. Gawrychowski, G.M. Landau, T. Starikovskaya} 
\Copyright{P. Gawrychowski, G.M. Landau, T. Starikovskaya}
\subjclass{\ccsdesc[500]{Theory of computation~Pattern matching}}
\keywords{Dictionary look-up, Hamming distance, compact data structures}

\EventEditors{Igor Potapov, Paul Spirakis, and James Worrell}
\EventNoEds{3}
\EventLongTitle{43rd International Symposium on Mathematical Foundations of Computer Science (MFCS 2018)}
\EventShortTitle{MFCS 2018}
\EventAcronym{MFCS}
\EventYear{2018}
\EventDate{August 27--31, 2018}
\EventLocation{Liverpool, GB}
\EventLogo{}
\SeriesVolume{117}
\ArticleNo{66}

\begin{document}

\maketitle

\begin{abstract}
We revisit the fundamental problem of dictionary look-up with mismatches. Given a set (dictionary) of $d$ strings of length $m$ and an integer $k$, we must preprocess it into a data structure to answer the following queries: Given a query string $Q$ of length $m$, find all strings in the dictionary that are at Hamming distance at most $k$ from $Q$. Chan and Lewenstein (CPM 2015) showed a data structure for $k = 1$ with optimal query time $\Oh(m/w + occ)$, where $w$ is the size of a machine word and $occ$ is the size of the output. The data structure occupies $\Oh(w d \log^{1+\eps} d)$ extra bits of space (beyond the entropy-bounded space required to store the dictionary strings). In this work we give a solution with similar bounds for a much wider range of values $k$. Namely, we give a data structure  that has $\Oh(m/w + \log^k d + occ)$ query time and uses $\Oh(w d \log^k d)$ extra bits of space.   
\end{abstract}

\section{Introduction}
	\label{sec:introduction}
	The problem of dictionary look-up was introduced by Minsky and Papert in 1968 and is a fundamental task in many areas such as bioinformatics, information retrieval,  and web search. Informally, the task is to store a set of strings referred to as dictionary in small space to maintain the following queries efficiently: Given a query string, return all dictionary strings that are close to it under some measure of distance. In this work we focus on Hamming distance and exact solutions to the problem. Formally, the problem is stated as follows.

\textbf{Dictionary look-up with $k$ mismatches.} We are given a dictionary that is a set of $d$ strings of length $m$ and an integer $k > 0$. The task is to preprocess the dictionary into a data structure that maintains the following queries: Given a string $P$ of length $m$, return all the strings in the dictionary such that the distance between each of them and $P$ is at most $k$.

As a natural first step, much effort has been concentrated on the case $k =1$~\cite{Belazzougui2009FasterAS,Hon2011CompressedDM,Belazzougui2012CompressedSD,Brodal1996ApproximateDQ,Brodal2000ImprovedBF,Yao1997DictionaryLW,CL2015}.
We note that the structure of the problem in this case is very special. Namely, if two strings have Hamming distance at most one, then there is an integer $i$ such that their prefixes of length $i$ are equal and their suffixes of length $m-i-1$ are equal. Many existing solutions rely heavily on this property and cannot be extended to the case of arbitrary $k$. The first non-trivial solution for $k > 1$ was given in the seminal paper of Cole, Gottlieb, and Lewenstein~\cite{kerratatree}, who introduced a data structure called \emph{$k$-errata tree}. The $k$-errata tree requires $w \cdot \Oh(md + d\log^k d)$ bits of space and has query time $\Oh(m + \log^k d + occ)$, where $w$ is the size of a machine word and $occ$ is the size of the output. The subsequent work~\cite{Chan2006ALS,Chan2006CompressedIF,Lam2005ImprovedAS} mainly focused on improving the space complexity of the data structure.

Two works are of particular interest to us. For any $q = o(\log md)$ Belazzougui and Venturini~\cite{Belazzougui2012CompressedSD} showed a data structure for $k=1$ with query time $\Oh(m+occ)$ that uses $2md H_q + o(md) + 2d \log d$ bits of space, where $H_q$ is the $q$-th empirical entropy of the concatenation of all strings in the dictionary. It was followed by the work of Chan and Lewenstein~\cite{CL2015}, who improved the query time to $\Oh(m/w + occ)$, while using approximately the same amount of bits, $2md H_q + o(md) + \Oh( w d \log^{1+\eps} d)$. In the model of Chan and Lewenstein the size $\sigma$ of the alphabet is constant, the query string arrives in a packed form, meaning that each $w / \log \sigma$ letters are stored in one machine word, under the standard assumption $w = \Theta(\log md)$. The interest in this kind of bounds is explained by the fact that the value $md H_q$ is a lower bound to the output size of any compressor that encodes each letter of the dictionary strings with a code that only depends on the letter itself and on the $q$ immediately preceding letters~\cite{Manzini:2001:ABT:382780.382782}.

\subsection{Our contribution and techniques}
We investigate further this line of research and give a new data structure with similar bounds for a much wider range of values $k$. We adopt the model of Chan and Lewenstein and show a data structure for dictionary look-up with $k$ mismatches that has query time $\Oh(m/w + \log^k d + occ)$ and uses $2mdH_q + o(md) + \Oh(w d \log^{k} d)$ bits of space for all $d > 2$ (Theorem~\ref{th:main}). If in addition $k \le \log (m/w) / \log \log d$, the query time becomes $\Oh(m/w + occ)$, matching the query time of Chan and Lewenstein.

The basis of our data structure is the $k$-errata tree of Cole, Gottlieb, and Lewenstein~\cite{kerratatree}. We first introduce a small but important modification to this data structure that will allow us to reduce the time requirements for non-constant $k$. At a high level, the $k$-errata tree is a collection of compact tries, where each trie contains suffixes of a subset of strings in the dictionary. The query algorithm runs $\Oh(\log^k d)$ prefix search queries in the tries.  In Section~\ref{sec:LCP-linear} we show that the prefix search queries can be implemented in $\Oh(m/w)$ shared plus $\Oh(\log d)$ time per query using $\Oh(md)$ space beyond the space required by the $k$-errata tree. Next, in Section~\ref{sec:LCP-entropy} we show how to improve the space complexity to entropy-bounded. Our main contribution at this step is a new reduction from prefix search queries in the tries of the $k$-errata trees to prefix search queries on a compact trie containing only a subset of all suffixes of the dictionary strings. Finally, in Section~\ref{sec:removeloglog} we improve the $\Oh(\log d)$ time that we spend per each query to $\Oh(1)$ (amortised) time by a clever use of Karp-Rabin fingerprints, which gives the final result, Theorem~\ref{th:main}. We emphasize that we derandomize the query algorithm and that our data structure is deterministic, regardless the fact that we use Karp-Rabin fingerprints.

\subsection{Related work} 
Many of the works we cited above consider not only the Hamming distance, but also the edit distance. This is in particular true for $k = 1$, when the edit distance and Hamming distance are equivalent. Another interesting direction is heuristic methods for the Hamming and the edit distances which have worse theoretical guarantees but perform well in practice~\cite{CG2017,BHSH2007,Karch2010ImprovedFS,Mor:1982:HCM:358728.358752}. 
Finally, we note that the solutions discussed in this work are beneficial for low-distance regime, i.e. when $k = o(\log d)$. If $k = \omega (\log d)$, one should turn to approximate  approaches, such as locality-sensitive hashing (see~\cite{Andoni:2015:ODH:2746539.2746553} and references therein).

Several works have studied the question of developing efficient data structures for string processing when the query arrives in a packed form. In particular, Takuya et al. suggested a data structure called packed compact tries~\cite{packedtrie} to maintain efficient exact dictionary look-ups, and Bille, G{\o}rtz, and Skjoldjensen used similar technique to develop an efficient text index~\cite{packedtextindex}. 
	
\section{Preliminaries}
	\label{sec:preliminaries}
	We assume a constant-size integer alphabet $\{1, 2, \dots, \sigma\}$. A \emph{string} is a sequence   of letters of the alphabet. For a string $S = s_1 s_2 \dots s_m$ we denote its length $m$ by $|S|$ and its substring $s_i s_{i+1} \dots s_j$, where $1\le i < j \le m$, by $S[i,j]$. If $i = 1$, the substring $S[1,j]$ is referred to as \emph{prefix} of~$S$. If $j = m$, $S[i,m]$ is called a \emph{suffix} of $S$. We say that $S$ is given in a \emph{packed form} if each $w / \log \sigma$ letters of $S$ are stored in one machine word, i.e. $S$ occupies $\Oh(m/w)$ machine words in total. Given a string $S$ in packed form, we can access (a packed representation) of any $\Oh(w)$-length substring of $S$  in constant time using the shift operation.

A \emph{trie} is a basic data structure for storing a set of strings. A trie is a tree which has the following three properties:

\begin{enumerate}
\item Each edge is labelled by a letter of the alphabet;
\item Each two edges outgoing from the same node are labelled with different letters, and the edges are ordered by the letters;
\item Let the label of a node $u$ be equal to the concatenation of the labels of the edges in the root-to-$u$ path. For each string $S$ in the set there is a node of the trie such that its label is equal to $S$, and the label of each node is equal to a prefix of some string in the set. 
\end{enumerate}

At each node we store the set of ids of the strings that are equal to the node's label. The number of nodes in a trie can be proportional to the total length of the strings. To improve the space requirements, we replace each path of nodes with degree one and with no string ids assigned to them with an edge labelled by the concatenation of the letters on the edges in the path. The result is called a \emph{compact trie}. 
Each node of the trie is represented in the compact trie as well, some as nodes, and some as positions in the edges. We refer to the set of all nodes and the positions in the edges of the compact trie as positions. 

\begin{fact}
A compact trie containing $x$ strings has $\Oh(x)$ nodes.
\end{fact}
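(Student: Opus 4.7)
The plan is to classify the nodes of the compact trie according to their degree and use the compaction rule to bound each class in terms of $x$.

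First I would partition the nodes into three groups: (a) the leaves, (b) the internal nodes of degree at least two (branching nodes), and (c) the internal nodes of degree exactly one. Since paths of degree-one nodes that carry no string id are replaced by a single labelled edge, every node in class (c) must carry at least one string id. There are only $x$ strings in total, so the set of nodes carrying any string id has size at most $x$; in particular $|(c)| \le x$.

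Next I would handle the leaves. A leaf with no string id would form a degree-one path together with its incoming edge that could be removed (more precisely, the edge above it would have no endpoint needed), contradicting minimality of the compact trie. Hence every leaf carries a string id, and again $|(a)| \le x$. For the branching nodes, I would invoke the standard fact that in any rooted tree in which every node in class (b) has at least two children, $|(b)| < |(a)|$; this follows by an easy induction (or by a double-counting argument on edges). Consequently $|(b)| < x$, and summing over the three classes, together with the (single) root, yields a total of $\Oh(x)$ nodes.

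There is no real obstacle here: the only point that needs care is to verify that the compaction rule indeed forces every surviving degree-one node to carry a string id, and every leaf likewise, so that the bound on nodes with ids translates to a bound on the number of nodes overall. Once this is noted, the claim reduces to the elementary tree inequality relating leaves and branching nodes.
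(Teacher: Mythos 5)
The paper states this as an unproved preliminary fact (folklore), so there is no in-paper proof to compare against. Your argument is correct and is essentially the standard one: bound the number of leaves, the number of branching internal nodes by the number of leaves, and the number of unary survivors by the number of id-bearing nodes.

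One small point worth tightening: your claim that every leaf must carry a string id is justified by appeal to ``minimality of the compact trie,'' but the compaction rule as stated only contracts \emph{degree-one} paths, i.e.\ unary internal nodes; it says nothing directly about leaves. The cleaner justification comes from property~3 of the trie definition itself: the label of every node is a prefix of some string in the set, and a leaf has no continuation below it, so its label cannot be a \emph{proper} prefix of any set string and must therefore equal one of them. Hence every leaf carries an id, giving $|(a)| \le x$ as you claim. With that substitution the proof is airtight; the branching-node bound $|(b)| < |(a)|$ is the usual edge-counting argument, and summing the three classes gives at most $3x$ nodes, which is $\Oh(x)$.
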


\section{The k-errata tree: Reminder and fix}
	\label{sec:k-errata}
	Our definition of the $k$-errata tree follows closely that of Cole, Gottlieb, and Lewenstein~\cite{kerratatree}, but as explained below we introduce an important fix to the original definition. We try to be as concise as possible, but we feel obliged to provide all the details both because we modify the original definition and because the details are important for our final result. 

\subparagraph{Intuition.} Let us explain the main idea first. Denote the given  dictionary of strings by $\D$. The $k$-errata tree for $\D$ is built recursively. We start with the compact trie $T$ containing all the strings in $\D$ and decompose it into heavy paths. 

\begin{definition}[\cite{heavypath}]
The \emph{heavy path} of $T$ is the path that starts at the root of $T$ and at each node $v$ on the path branches to the child with the largest number of leaves in its subtree (\emph{heavy} child), with ties broken arbitrarily. The heavy path decomposition is defined recursively, namely it is defined to be a union of the heavy path of $T$ and the heavy path decompositions of the subtrees of $T$ that hang off the heavy path. The first node in a heavy path is referred to as its \emph{head}.
\end{definition}

Recall that our task is to find all strings in $\D$ such that the Hamming distance between them and the query string $P$ is at most $k$. As a first step, we find the longest path that starts at the root of $T$ and is labelled with a prefix of $P$. Let this path trace heavy paths $H_1, H_2, \dots, H_j$, leaving the heavy path $H_i$ at a position $u_i$ of $T$, $1 \le i \le j$. We can partition all the strings in $\D$ into three categories: 

\begin{enumerate}
\item \label{tp:i} Strings diverging off a heavy path $H_i$ at some node $u$, where $u$ is located above $u_i$;
\item \label{tp:ii} Strings in the subtrees of $u_i$'s children that diverge from the heavy path $H_{i+1}$, for $1 \le i < j$;
\item  \label{tp:iii} Strings in the subtree rooted at $u_j$.
\end{enumerate}

Consider the set of strings in $\D$ that diverge from a heavy path $H_i$ at a node $u$. They necessarily have their first mismatch with $P$ there. The first idea is that we can fix that mismatch in each of the strings (decreasing the Hamming distance between them and $P$ by one), and then run a dictionary look-up with $(k-1)$ mismatches on the resulting set of strings. The second idea is that running an independent dictionary look-up query for each node in each heavy path is expensive, so we introduce a grouping on the nodes that reduces the number of queries to logarithmic.

\subparagraph{Data structure.}
We assign each string in $\D$ a credit of $k$ mismatches and start building the $k$-errata tree in a recursive manner. First, we build the compact trie $T$ for the dictionary $\D$. For each leaf of $T$ we store the ids of the dictionary strings equal to the leaf's label ordered by the mismatch credits. Second, we decompose $T$ into heavy paths. For each node of $T$ we store a pointer to the heavy path it belongs to, and for each heavy path we store a pointer to its head. We will now make use of weight-balanced trees, defined in analogy with weight-balanced search trees. 

\begin{definition}
A weight-balanced tree with leaves of weights $w_1, w_2, \dots, w_h$ (in left-to-right order) is a ternary tree. We build it recursively top-to-down. Let $\mu$ be the smallest index such that $w_1  + \dots + w_\mu > (w_1  + \dots + w_h)/2$. Then the left subtree hanging from the root is a weight-balanced tree with leaves of weight $w_1, w_2, \dots, w_{\mu-1}$, the middle contains one leaf of weight $w_\mu$, and the right subtree is a weight-balanced tree with leaves of weight $w_{\mu+1}, \dots, w_h$.
\end{definition} 
 
We build two sets of $(k-1)$-errata trees for each heavy path $H$ of $T$. We call the trees in the first set vertical, and in the second set horizontal, according to the way we construct them.

We first explain how we build the vertical $(k-1)$-errata trees. Suppose that $H$ contains nodes $v_1, v_2, \dots, v_h$, and the weight $w_i$ of a node $v_i$ is the number of strings that diverge from $H$ at $v_i$. As a preliminary step, we build a weight-balanced tree $WBT(H)$ on the nodes in $H$. Consider a node of $WBT(H)$ containing $v_i, v_{i+1}, \dots, v_j$ in its subtree. Let $\delta$ be the length of the string $S$ written on the path from the head of $H$ to $v_{j}$, and $a$ be the first letter on the edge from $v_{j}$ to $v_{j+1}$. We build a new set of strings as follows: For each node $v_\ell$, $i \le \ell \le j$, we take each string that diverges from the path $H$ at $v_\ell$, cut off its prefix of length $\delta+1$, and decrease the credit of the string by the number of mismatches between the cut-off prefix and $S \circ a$ (the string $S$ appended with the letter $a$). If the credit of a string becomes negative, we delete it. Finally, we build the $(k-1)$-errata tree for each of the newly created sets of strings.

We now explain how we build the horizontal $(k-1)$-errata trees. We repeat the following for each node $v_j \in H$. Let $\delta$ be the length of the label of $v_j$. Consider the set of all children of $v_j$ except for the node $v_{j+1}$ (the child of $v_j$ that belongs to $H$). For each child $v$ in this set, we build a new set of strings as follows: We take each string that ends below $v$, cut off its prefix of length $\delta+1$, and decrease the credit of the string by $1$. Similar to above, if the credit of a string becomes negative, we delete it. We define the weight of each child as the number of strings in the corresponding set. Next, we build the weight-balanced tree on the set of the children, and for each node of the tree consider a set of strings that is a union of the sets of strings below it. Finally, we build the $(k-1)$-errata tree for each of these sets of strings. 

\begin{remark}
Our modification to the original definition is that we truncate the strings and store the mismatch credits. Because of that, all the strings we work with are suffixes of the dictionary strings, which allows us to process them efficiently.
\end{remark}

\subparagraph{Queries.} A dictionary look-up with $k$ mismatches for a string $P$ is performed in a recursive way as well. For the purposes of recursion, we introduce an extra parameter, $\mu$, and allow to run dictionary look-ups with mismatches from any position $u$ of a trie of the $k$-errata tree. We will make use of a procedure called \PrefixSearch: Given a string and a position $u$ of a trie, \PrefixSearch returns the longest path starting at $u$ that is labelled by a prefix of the query string. 

Suppose we must answer a dictionary look-up with $k$ mismatches for a string $P$ that starts at a position $u$.
We initialize $\mu = 0$. If $k = 0$, we run a \PrefixSearch to find a path in $T$ labelled by $P$. If such a path exists, we output all the dictionary strings assigned to the end of this path such that their mismatch credit $\ge \mu$. Assume now $k > 0$. If $|P| = 0$, the look-up terminates and we output all the dictionary strings assigned to the current position such that their mismatch credit $\ge \mu$. Otherwise, we run a \PrefixSearch to find the longest path $\pi$ starting at $u$ that is labelled by a prefix of $P$. Suppose that $\pi$ passes through heavy paths $H_1, H_2, \dots, H_j$, leaving $H_i$ at a position $u_i$, $1 \le i \le j$. Note that for $i < j$, $u_i$ is necessarily a node of $T$, and for $i = j$ it can be a position on an edge.

Recall that for each node of $T$ we store the heavy path it belongs to, and for each heavy path we store its head. The position $u_j$ is the ending node of $\pi$. To find $u_{j-1}$, consider the heavy path $H_j$ containing $u_j$, by definition, $u_{j-1}$ is the parent of the head of $H_j$. We find all the nodes $u_i$, $1 \le i \le j$, analogously. Recall that we partitioned the dictionary strings into three types.

{\bf Strings of Type~\ref{tp:i}}. We process each path $H_i$ in turn. We select a set of nodes of the weight-balanced tree $WBT(H_i)$ covering the part of $H_i$ from the beginning and up to (but not including) $u_i$. To do this, we follow the path from the root of $WBT(H_i)$ to $u_i$ and take the nodes that hang off to the left of the path. Consider one of the selected nodes $v$ and its $(k-1)$-errata tree. All the strings in this tree have equal lengths $\delta$. To finish the recursive step, we run a dictionary look-up with $(k-1)$ mismatches for the suffix of $P$ of length $\delta$ in this tree. 

{\bf Strings of Type~\ref{tp:ii}}. We take the weight-balanced tree for $u_i$ and select a set of nodes that covers all its leaves except for the head of $H_{i+1}$. To select this set, we find the path from the root of the weight-balanced tree for $u_i$ to the head of $H_{i+1}$, and take the nodes that hang off this path. For each of the selected nodes, we run a dictionary look-up with $(k-1)$ mismatches analogously to above. We also run a dictionary look-up with $(k-1)$ mismatches with $\mu = \mu + 1$ starting from the position in $H_i$ that is one letter below $u_i$.

{\bf Strings of Type~\ref{tp:iii}}. If $u_j$ is a position on an edge, we run a dictionary look-up query with $(k-1)$ mismatches from the next position on the edge with $\mu = \mu + 1$. If $u_j$ is a node, we run two dictionary look-up queries with $(k-1)$ mismatches. First query is run in the horizontal $(k-1)$-errata tree corresponding to the set of all children of $u_j$ that are not in $H_j$. The second query is run from a position in $H_j$ that is one letter below $u_j$ with $\mu = \mu + 1$. 

Correctness of the algorithm follows from the following observation: first, we account for all dictionary strings. Second, in the case of $(k-1)$-errata trees, we account for the mismatches between the portion of the strings that we truncate and the query string via the mismatch credits. Finally, when we continue the search in the same tree, there is just one mismatch and we account for it by increasing $\mu$.

\subparagraph{Analysis.} The bounds on the space and the time complexities are summarised below. The proofs of the lemmas, which we provide in Appendix~\ref{sec:appendix} for completeness, follow closely the proofs given by Cole, Gottlieb, and Lewenstein~\cite{kerratatree}. 

\begin{restatable}{lem}{kerrataspace}\label{lm:kerrataspace}
The tries of the $k$-errata tree contain $\Oh(d \log^k d)$ strings in total. 
\end{restatable}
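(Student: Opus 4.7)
The plan is to proceed by induction on $k$. The base case $k = 0$ is immediate, since the $0$-errata tree is simply the compact trie containing the $d$ input strings. For the inductive step, let $N_k(d)$ denote the total number of strings across all tries of a $k$-errata tree on a dictionary of $d$ strings, and let $n_1,n_2,\ldots$ be the dictionary sizes of the $(k-1)$-errata trees attached to the nodes of the vertical and horizontal weight-balanced trees at the top level. The inductive hypothesis, together with $n_i \le d$, gives $N_k(d) \le d + \sum_i N_{k-1}(n_i) \le d + C\,\log^{k-1} d \cdot \sum_i n_i$, so the whole lemma reduces to the bound $\sum_i n_i = \Oh(d \log d)$.

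To bound $\sum_i n_i$ I would count, for each dictionary string $s$ separately, the number of WBT-node dictionaries at the top level that contain $s$. Let $H_1,\ldots,H_p$ be the heavy paths encountered along the root-to-leaf path of $s$ in the main compact trie, and let $\sigma_i$ be the number of strings in the subtree rooted at the head of $H_i$. For each $i < p$ the string $s$ diverges off $H_i$ at a node $v_i$ onto a light child $c_i$, which is the head of $H_{i+1}$, so the standard heavy-path argument gives $\sigma_{i+1} \le \sigma_i/2$ and in particular $p = \Oh(\log d)$.

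By construction, $s$ appears in every ancestor of its leaf in $WBT(H_i)$ (vertical) and in every ancestor of its leaf in the horizontal WBT at $v_i$, for each $i < p$. The weight-balanced property bounds these two depths by $\Oh(\log(W_{H_i}/w_{v_i}))$ and $\Oh(\log(w_{v_i}/\sigma(c_i)))$, where $W_{H_i}$ is the total weight of $WBT(H_i)$ and $w_{v_i}$ is the weight of its leaf $v_i$. The two key structural inequalities are $W_{H_i} \le \sigma_i$ (every string diverging from $H_i$ lives in the subtree of its head) and $w_{v_i} \ge \sigma(c_i) = \sigma_{i+1}$ (all strings passing through the light child $c_i$ contribute to the divergers at $v_i$), so the combined contribution at index $i$ is at most $\Oh(\log(\sigma_i/\sigma_{i+1}))$. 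Summing over $i = 1,\ldots,p-1$ telescopes to $\Oh(\log(\sigma_1/\sigma_p)) = \Oh(\log d)$, which yields the desired $\sum_i n_i = \Oh(d \log d)$.

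The main obstacle is exactly this telescoping. Bounding the WBT depth by $\Oh(\log d)$ and multiplying by the $\Oh(\log d)$ heavy paths gives only $\Oh(\log^2 d)$ incidences per string, which would degrade the final bound to $\Oh(d \log^{k+1} d)$. The tight bound requires aligning the WBT weight ratio $W_{H_i}/\sigma(c_i)$ with the heavy-path subtree ratio $\sigma_i/\sigma_{i+1}$, and recognizing that the vertical and horizontal contributions conspire to telescope along the root-to-leaf path of $s$.
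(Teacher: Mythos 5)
Your telescoping bound $\sum_i n_i = \Oh(d\log d)$ is correct, and it captures the same structural fact the paper exploits: for each dictionary string, the vertical and horizontal WBT nodes that contain it have geometrically decreasing dictionary sizes (at most four with size in each range $(d/2^{j+1}, d/2^j]$). The key structural inequalities $W_{H_i}\le\sigma_i$ and $w_{v_i}\ge\sigma_{i+1}$ and the resulting telescope to $\Oh(\log(\sigma_1/\sigma_p))$ are exactly the right observation, and you correctly identify that the naive $\Oh(\log^2 d)$ per-string count would be fatal.

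The gap is in how you feed this into the induction. Writing $\sum_i N_{k-1}(n_i) \le C\log^{k-1}(d)\sum_i n_i$ replaces each $\log n_i$ by $\log d$, which discards the very geometric decay you just established. The constant $C$ here is $C_{k-1}$ (it depends on the recursion level), and the resulting recurrence $C_k \approx c'C_{k-1}$ makes the hidden constant grow like $(c')^k$; you therefore only obtain $N_k(d) = \Oh(d(c'\log d)^k)$ for some absolute $c'>1$, not $\Oh(d\log^k d)$ with a constant uniform in $k$. The paper avoids this by carrying the size distribution through the recursion rather than collapsing it to the total $\sum_i n_i$: with the ansatz $\tilde S_k(d') = 2\cdot 4^k d'{\log d'+k\choose\log d'}-d'$ and the monotonicity of $\tilde S_{k-1}(x)/x$, the per-string contribution becomes $4\sum_j \tilde S_{k-1}(d'/2^j)/(d'/2^j)$, which collapses by the hockey-stick identity into a single binomial coefficient ${\log d'+k\choose k}$, and the factorial in its denominator absorbs the $4^k$. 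The uniform constant is not cosmetic: it is what justifies the paper's downstream claim that $k\le\log(m/w)/\log\log d$ already yields $\Oh(m/w+occ)$ query time, since $(c')^k(\log d)^k$ would exceed $m/w$ for part of that range. To close your proof, you should keep the geometrically-decaying multiset $\{n_i\}$ explicit in the recurrence (e.g.\ via the binomial ansatz) instead of summarizing it as a single $\Oh(d\log d)$ total.
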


\begin{restatable}{lem}{kerratatime}\label{lm:kerratatime}
A dictionary look-up with $k$ mismatches for a query string $P$ requires $\Oh(\log^k d)$ operations \PrefixSearch. Apart from the time required for these operations, the algorithm spends $\Oh(\log^k d + occ)$ time.
\end{restatable}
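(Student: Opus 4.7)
The plan is to prove both statements by induction on $k$, mirroring the recursive structure of the query algorithm. Let $T(k)$ denote the number of \PrefixSearch operations triggered by a $k$-query and $N(k)$ the remaining time it spends. For the base case $k=0$, a single \PrefixSearch locates the path labelled by $P$ (if it exists), and the ids at its endpoint---stored sorted by mismatch credit---are scanned from the highest credit until one falls below $\mu$; this contributes one \PrefixSearch, $\Oh(1)$ non-reporting work and $\Oh(occ)$ output.

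For the inductive step, one \PrefixSearch computes $\pi$ and hence $u_j$; the remaining divergence points $u_{j-1},\dots,u_1$ are recovered by repeatedly following the stored head-pointer of the heavy path containing the current point up to its parent. Since $\pi$ crosses at most $\Oh(\log d)$ heavy paths---each jump to a light child at least halves the enclosing subtree---this part costs $\Oh(\log d)$ extra time. The call then spawns $(k-1)$-errata recursive calls of three kinds: Type~I subtrees of $WBT(H_i)$ covering the prefix of $H_i$ up to but not including $u_i$; Type~II subtrees of the horizontal weight-balanced tree at each $u_i$ covering every non-$H_{i+1}$ child; and $\Oh(1)$ Type~III in-place recursions at $u_j$.

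The main obstacle is to show that the total fan-out across Types~I--III is $\Oh(\log d)$ per call, rather than the naive $\Oh(\log^2 d)$. I would follow the charging argument of Cole, Gottlieb, and Lewenstein: combine the depth-$\Oh(\log W_i)$ bound for each weight-balanced tree with the halving of subtree sizes along $\pi$ and the disjointness of the string-sets handled at different heavy paths, folding the horizontal weight-balanced tree contributions into the same accounting. Granting this per-call fan-out, the recurrence $T(k) \le 1 + \Oh(\log d)\cdot T(k-1)$ with $T(0)=\Oh(1)$ gives $T(k) = \Oh(\log^k d)$. An analogous recurrence for $N(k)$ yields $N(k) = \Oh(\log^k d)$, and the reporting work across the recursion sums to $\Oh(occ)$ because the sets of dictionary strings output by distinct recursive calls are disjoint, giving the claimed $\Oh(\log^k d + occ)$ bound on non-\PrefixSearch time.
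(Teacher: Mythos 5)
Your overall scaffolding (induction on $k$, per-call fan-out analysis, a recurrence for $T(k)$) matches the paper's, but there are two substantive problems.

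First, the recurrence you write down is too crude. From $T(k)\le 1+c\log d\cdot T(k-1)$ with $T(0)=\Oh(1)$ you get $T(k)=\Oh\bigl((c\log d)^k\bigr)=\Oh\bigl(c^k\log^k d\bigr)$, not $\Oh(\log^k d)$: the hidden constant $c^k$ grows with $k$, and the lemma is stated (and used) with a constant independent of $k$. The paper avoids this by keeping track of the \emph{sizes} of the recursive subproblems. Because each jump to a light child halves the subtree size, the $(k-1)$-errata calls generated at one level act on dictionaries of geometrically decreasing size, so the true recurrence is
\[
T_k(d)\ \le\ 9\bigl[T_{k-1}(d')+T_{k-1}(d'/2)+\cdots+T_{k-1}(1)\bigr]+1,
\]
which is strictly smaller than $1+\Oh(\log d)\cdot T_{k-1}(d)$. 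This finer recurrence is then solved by exhibiting the explicit majorant $\tilde T_k(d')=2\cdot 9^k\binom{\log d'+k}{\log d'}-1$, and one checks via $\binom{n}{k}\le(en/k)^k$ that the $9^k$ is absorbed by the $1/k^k$ from the binomial, yielding $\Oh(\log^k d)$ with an absolute constant. Your version discards exactly the information (shrinking subproblem size) that makes this cancellation possible.

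Second, you explicitly defer the per-call accounting --- the claim that Types~I--III generate only $\Oh(\log d)$ calls with at most $\Oh(1)$ of each size class --- to ``the charging argument of Cole, Gottlieb, and Lewenstein.'' That argument (the interplay between Fact~\ref{fact:wbt} on weight-balanced trees and the weight-halving between consecutive $u_i$, giving at most a constant number of $(k-1)$-calls at each size $d,d/2,d/4,\dots$) is the actual content of the inductive step; as written the proposal reduces the lemma to itself. You would need to spell it out, and moreover spell it out in the stronger size-tracking form above, not just as a ``fan-out is $\Oh(\log d)$'' headcount, for the induction to close with the right constant. The base case and the $\Oh(occ)$ accounting via disjointness of reported sets are fine.
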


In the next section we give an efficient implementation of \PrefixSearch under an assumption that $P$ arrives in a packed form. We will use, in particular, the following simple observation.

\begin{fact}\label{fct:prefixsearch} 
Each trie of the $k$-errata tree is built on a set of equal-length suffixes of the dictionary strings. If we run a \PrefixSearch for a suffix $S$ of the query string $P$ from a position $u$ of a trie $\tau$ of the $k$-errata tree, then the strings in the subtree of $u$ have length $|S|$. 
\end{fact}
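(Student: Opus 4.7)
The plan is to split the fact into its two assertions and prove each by induction.

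First I would show that each trie in the $k$-errata tree stores equal-length suffixes of dictionary strings, by induction on the construction hierarchy. The base case is the top-level compact trie $T$, which stores the $d$ dictionary strings themselves, all of length $m$ and trivially suffixes of themselves. For the inductive step, assume a trie $\tau$ at some level contains equal-length suffixes of dictionary strings of common length $\ell_\tau$. In both the vertical and horizontal constructions of $(k-1)$-errata trees for $\tau$, the new strings are produced by chopping a prefix of a single fixed length (the $\delta+1$ of the construction) off a subset of the strings in $\tau$. A fixed-length prefix chopped off a suffix of a dictionary string is still a suffix of the same dictionary string, and all the resulting strings share the common length $\ell_\tau - \delta - 1$.

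For the second assertion I would set up the following invariant about the query algorithm: whenever \PrefixSearch is invoked for a suffix $S$ of $P$ from a position $u$ of a trie $\tau$ whose strings have common length $\ell_\tau$, the equality $|S| = \ell_\tau - d_u$ holds, where $d_u$ is the depth of $u$ in $\tau$. This is exactly the statement that the strings in the subtree of $u$, regarded as starting at $u$, have length $|S|$. The base case is the top-level invocation, where $u$ is the root of $T$, $S = P$, and $\ell_\tau = m = |P|$. For the inductive step, I would check each type of recursive call: (a) a recursion into a fresh $(k-1)$-errata tree (for strings of Types~\ref{tp:i}, \ref{tp:ii}, or for the horizontal sub-case of Type~\ref{tp:iii}) starts at the root of the new sub-trie with a suffix of $P$ whose length equals the common string length there; (b) a continuation in the same trie one letter past a node $u_i$ or an edge position $u_j$ (after charging a mismatch) deepens $u$ by one while shortening the local suffix of $P$ by one. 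In both cases the sum $d_u + |S| = \ell_\tau$ is preserved.

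The main technical obstacle lies in case (a): one must verify that the length of the suffix of $P$ supplied to each recursive call exactly matches the common string length dictated by the construction. This alignment should be checked separately for the vertical WBT recursion, the Type~\ref{tp:ii} WBT-over-the-children-of-$u_i$ recursion, and the horizontal $(k-1)$-errata tree recursion, carefully matching the $\delta+1$ chopped off in the construction to the portion of $P$ that was consumed along the heavy path in the query. Once the alignment is confirmed in all three flavors, the induction closes and both halves of the fact follow.
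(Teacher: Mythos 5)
The paper itself offers no proof of this Fact---it is stated as ``the following simple observation''---so there is no paper argument to compare against; the comparison is therefore about whether your proposed verification is correct and complete.

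Your decomposition into the two claims is right, and so is the invariant you set up for the second claim: that whenever a \PrefixSearch for a suffix $S$ is issued from a position $u$ of a trie $\tau$ with string length $\ell_\tau$, we have $d_u + |S| = \ell_\tau$. The proof of the first claim (each trie stores equal-length suffixes of dictionary strings) is correct and complete, since both the vertical and horizontal constructions always chop off a prefix of one fixed length $\delta+1$ from a set of strings that, by the inductive hypothesis, are equal-length suffixes.

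The issue is that you declare case (a) of the second claim to be ``the main technical obstacle,'' list the three flavours that need checking, and then close with ``once the alignment is confirmed\dots'' without confirming it. That leaves the hard step undone. In fact it is not hard at all, and noticing why would have finished the proof: the paper's description of the query algorithm \emph{defines} the recursive call to be for ``the suffix of $P$ of length $\delta$,'' where $\delta$ is the common string length in the child errata tree; the alignment is therefore definitional, not something that needs to be derived by matching the $\delta+1$ consumed in the construction against the portion of $P$ consumed by the search. You were looking for a computation where the statement was simply postulated by the algorithm. The remaining content of the Fact for case (a) is only the first claim (that the child trie really does consist of equal-length suffixes), which your induction already covers.

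One more small imprecision in case (b): you say the continuation ``deepens $u$ by one while shortening the local suffix of $P$ by one,'' but the continuation position $u'_i$ can be many letters below the original $u$. The argument should first observe that the invariant $d_v + |\text{remaining suffix at } v| = \ell_\tau$ already holds at every position $v$ on the search path from $u$ labelled by a prefix of $S$ (trivially, since moving along the path trades depth for suffix length one-for-one), in particular at $u_i$ and $u_j$; the extra one-letter step to $u'_i$ or $u'_j$ then preserves the sum. As written, ``deepens $u$ by one'' is literally false, though the intended reasoning is recoverable.
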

	
\section{Prefix search for packed strings}
	\label{sec:LCP}
	We first remind several well-known data structure results that we use throughout the section. A \emph{priority queue} is a data structure like a regular queue, where each element has an integer (``priority'') associated with it. In a priority queue, an element with high priority is served before an element with low priority. A priority queue can be implemented as a heap, that for a set of $x$ elements occupies $\Oh(wx)$ bits of space and has query time $\Oh(\log x)$. 

A \emph{predecessor data structure} on a set of integers supports the following queries: Given an integer $z$, return the largest integer in the set that is at most $z$. For a set of $x$ integer keys, the predecessor data structure can be implemented as a binary search tree in $\Oh(wx)$ bits of space to support the predecessor queries in time $\Oh(\log x)$. (We do not use solutions such as~\cite{Fischer:2015,yfasttree} to avoid dependency on $m$, which will be important for our final result.) 

A \emph{dictionary data structure} stores a set of integers. A dictionary look-up receives an integer $z$ and outputs ``yes'' if $z$ belongs to the set.

\begin{lemma}[\cite{Ruzic:2008}]\label{lm:dictionary}
Let $S$ be any given set of $x$ integers. There is a dictionary over $S$ that occupies $\Oh(wx)$ bits of space, and has query time $\Oh(1)$.
\end{lemma}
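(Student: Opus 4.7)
My plan is to prove this via the Fredman--Komlós--Szemerédi (FKS) two-level perfect hashing scheme, adapted to be built deterministically as in Ružić's work. The query-time and space bounds follow from the standard FKS analysis; the whole technical weight lies in realising the two levels deterministically without blowing up space.

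At the top level I would pick a primary hash function $h$ from a suitable family of functions $\{1,\ldots,2^w\} \to \{0,1,\ldots,r-1\}$ with $r = \Theta(x)$, chosen so that the number of colliding pairs under $h$ over $S$ is $\Oh(x)$. Equivalently, if $B_0,B_1,\ldots,B_{r-1}$ are the buckets induced by $h$ and $s_i = |B_i|$, then $\sum_i s_i^2 = \Oh(x)$. Such a function exists in any $2$-universal family by an averaging argument, and Ružić shows that one from a concrete (e.g.\ modular) family can be found deterministically in polynomial time by a potential-function method. At the second level, for each bucket $B_i$ I build a local table of size $\Theta(s_i^2)$ together with a secondary hash function $h_i$ that injects $B_i$ into this table (a perfect hash for $B_i$); again such $h_i$ exists by the birthday argument and can be found deterministically by scanning a bounded family. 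Inside each slot I store the single key that lands there.

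To answer a query on an integer $z$, I compute $h(z)$ to locate the bucket, then $h_{h(z)}(z)$ to locate the slot, and finally compare the stored key against $z$; each step is $\Oh(1)$ word operations, so the total query time is $\Oh(1)$. For the space, the primary table stores $\Oh(r) = \Oh(x)$ pointers/descriptors, and the secondary tables use $\sum_i \Oh(s_i^2) = \Oh(x)$ words in total. Each stored cell is a single machine word, so the overall space is $\Oh(wx)$ bits as claimed.

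The main obstacle is the deterministic construction: even though the mere existence of suitable $h$ and $h_i$ is easy by averaging, obtaining them without randomisation requires a careful choice of hash family plus a greedy/potential-based selection procedure that guarantees an $\Oh(x)$ collision bound at the primary level and perfect hashing at the secondary level. This is precisely where I would invoke Ružić's machinery as a black box. For our application, only the $\Oh(wx)$-bit space and $\Oh(1)$ query-time bounds are needed; the construction time does not enter the statement, so we can remain agnostic about it beyond noting that it is polynomial.
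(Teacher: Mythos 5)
The paper does not actually prove this lemma: it is imported as a black box with a citation to Ru\v{z}i\'c, so there is no ``paper's own proof'' to compare against. That said, your FKS-style sketch is a perfectly valid way to realize the stated bounds: a primary level with $r=\Theta(x)$ buckets and $\sum_i s_i^2=\Oh(x)$, secondary tables of size $\Theta(s_i^2)$ with injective local hash functions, $\Oh(x)$ words (hence $\Oh(wx)$ bits) overall, and two hash evaluations plus one key comparison per query. You also correctly identify that the only genuinely nontrivial point is obtaining the hash functions deterministically, and that this is exactly the content of Ru\v{z}i\'c's work; since the lemma's statement is silent on construction time, treating that machinery as a black box is entirely appropriate. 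One small remark for precision: you should also account for storing the hash function descriptions themselves (one global $h$ plus one $h_i$ per non\-empty bucket), but these are $\Oh(1)$ words each and $\Oh(x)$ words in total, so they are absorbed into the same bound. It is also worth noting that Ru\v{z}i\'c's actual construction is not literally the FKS two\-level scheme --- his contribution is a different deterministic mechanism aimed at near\-sorting construction time --- but both routes deliver $\Oh(x)$ words of space and $\Oh(1)$ query time, which is all the lemma asserts.
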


We will also need \emph{lowest common ancestor queries} on tries. Given two nodes $u, v$ of a trie, their lowest common ancestor is a node of maximal depth that contains both $u$ and $v$ in its subtree. 

\begin{lemma}[\cite{LCA}]\label{lm:lca}
A trie of size $x$ can be preprocessed in $\Oh(wx)$ bits of space to maintain lowest common ancestor queries in $\Oh(1)$ time.
\end{lemma}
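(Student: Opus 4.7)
The plan is to apply the standard reduction from lowest common ancestor queries to $\pm 1$ range minimum queries on the Euler tour of the trie, and then solve the latter in constant time and in $\Oh(x)$ words of space via the Bender--Farach-Colton construction.

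First, I would perform an Euler tour of the trie of size $x$, producing three arrays of length $n=\Oh(x)$: an array $E$ listing the nodes visited (each node appearing every time it is entered or left), an array $D$ recording the depth of each visited node, and a table $F$ giving, for every node $v$, the index of its first occurrence in $E$. Each of these arrays occupies $\Oh(wx)$ bits. Using a standard fact, the lowest common ancestor of $u$ and $v$ is exactly $E[i^\star]$, where $i^\star$ minimises $D$ on the range $[F[u],F[v]]$ (swapping $u$ and $v$ if necessary). Since consecutive entries of $D$ differ by exactly $\pm 1$, the original problem reduces to $\pm 1$-RMQ on $D$.

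To answer such RMQs in $\Oh(1)$ time within $\Oh(wn)$ bits of space, split $D$ into blocks of size $b=\tfrac{1}{2}\log n$. For each block store both its minimum and the offset at which the minimum is attained, and build a sparse table on the resulting $n/b$ block minima: this uses $\Oh((n/b)\log(n/b))$ words, i.e.\ $\Oh(wn)$ bits, and answers any block-aligned range minimum in $\Oh(1)$ time via the standard overlap trick. Within-block queries are handled by precomputation: the $\pm 1$ property implies that, after subtracting the initial value of the block, there are only $2^{b-1}=\Oh(\sqrt{n})$ distinct block shapes, so we can tabulate the answer for every (shape, left endpoint, right endpoint) triple in $\Oh(\sqrt{n}\cdot b^2)=o(n)$ words. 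Each query then decomposes into two intra-block lookups plus one sparse-table lookup, all in constant time, and the output LCA is recovered as $E[i^\star]$.

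The only delicate point is fitting the RMQ index into $\Oh(wx)$ bits rather than the $\Oh(wx\log x)$ bits that a naive sparse table over the entire array would cost; the two-level block decomposition above is precisely what achieves this, and is the main obstacle. Everything else (Euler tour construction, the first-occurrence table, and the constant-time sparse-table lookup using bit tricks on a $\Theta(\log n)$-bit word) is routine under the standard word-RAM assumptions already used in the paper.
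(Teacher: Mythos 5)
Your proof is correct and is precisely the Bender--Farach-Colton argument that the cited reference contains; the paper itself states this lemma as an imported result and does not reprove it. The Euler-tour reduction from LCA to $\pm 1$-RMQ followed by the two-level block decomposition (sparse table on the $\Oh(n/\log n)$ block minima plus tabulated intra-block answers over the $\Oh(\sqrt{n})$ block shapes) is exactly the standard construction, and your bit-level space accounting matches the $\Oh(wx)$ bound claimed.
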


Finally, we need \emph{weighted level ancestor queries} on tries. A weighted level ancestor query receives a node $u$ and an integer $\ell$, and must output the deepest ancestor $u'$ of $u$ such that the length of the label of $u'$ is at most $\ell$. We will use the weighted level ancestor queries on tries for fast navigation: Suppose that we know a leaf labelled by a string $S$, then to find a position labelled by a prefix $S'$ of $S$ we can use one weighted level ancestor query instead of performing a \PrefixSearch for $S'$. To avoid dependency on $m$, we use the following simple folklore solution instead of~\cite{wLA,Amir:2007:DTS:1240233.1240242,Farach:1996:PHS:647815.738452}. 

\begin{lemma}\label{lm:wla}
A trie of size $x$ can be preprocessed in $\Oh(wx)$ bits of space to maintain weighted level ancestor queries in $\Oh(\log x)$ time.
\end{lemma}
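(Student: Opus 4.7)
The plan is to implement weighted level ancestor queries using a heavy-path decomposition combined with simple binary search inside each heavy path, which avoids any dependence on $m$ and yields the claimed $\Oh(wx)$ space / $\Oh(\log x)$ time bound.

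\textbf{Preprocessing.} First I would decompose the trie into heavy paths. For each node I store (i) a pointer to the heavy path it belongs to, and (ii) its label length. For each heavy path $H$ I store its head, the parent of its head (which lies on a different heavy path, or is undefined if the head is the root), and a sorted array $A_H$ containing the label lengths of the nodes of $H$ listed from head to bottom. Because label length strictly increases along any root-to-leaf path in a trie, $A_H$ is automatically sorted, so no extra sorting is needed. The total number of entries across all arrays $A_H$ equals the number of nodes $x$, giving $\Oh(wx)$ bits in total; all auxiliary pointers add another $\Oh(wx)$ bits.

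\textbf{Query.} To answer a query $(u,\ell)$, I walk up through heavy paths. At the current position I look at the head $h$ of the heavy path $H$ containing me and read the label length of $h$. If the label length of $h$ is $\le \ell$, the required ancestor $u'$ lies on $H$, because $u'$ is the deepest ancestor of $u$ whose label length is at most $\ell$ and by definition $h$ is the topmost node of $H$ while $u$ (or the previously traversed head's parent pointing into $H$) is at or below the bottom. I then binary search $A_H$ for the largest entry that is $\le \ell$, which identifies $u'$ in $\Oh(\log x)$ time. Otherwise the answer lies strictly above $h$, so I jump to the parent of $h$ (which is stored) and repeat.

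\textbf{Correctness and complexity.} Correctness follows from the fact that label length is monotone along any heavy path and along any root-to-leaf path, so at most one heavy path can contain $u'$, and that heavy path is exactly the one identified by the first head with label length $\le \ell$. The number of heavy paths visited is the number of heavy paths on the root-to-$u$ path, which is $\Oh(\log x)$ by the standard property of heavy-path decomposition (each jump to the parent of a head at least doubles the size of the containing subtree). Each step of the walk costs $\Oh(1)$, and the single binary search at the end costs $\Oh(\log x)$, giving $\Oh(\log x)$ total query time.

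\textbf{Main obstacle.} There is no real obstacle: the construction is folklore and the only point requiring care is to ensure that all structures are indexed by integers of size $\Oh(w)$ so that predecessor/binary search takes $\Oh(\log x)$ (not $\Oh(\log m)$) time, which is why I use a plain sorted array per heavy path rather than a more elaborate predecessor structure that would introduce dependence on $m$.
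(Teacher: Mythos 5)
Your proposal is correct and matches the paper's proof in essence: both rely on heavy-path decomposition, walk through the $\Oh(\log x)$ heavy paths on the root-to-$u$ path, and finish with an $\Oh(\log x)$-time search (your sorted array, the paper's binary search tree) for the label length within the relevant path. The sorted-array-versus-BST choice and the direction of the walk are immaterial implementation details.
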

\begin{proof}
We consider the heavy path decomposition of the trie. For each node we store a pointer to the head of the heavy path containing it, and for each path we build a binary search tree containing the length of the labels of the nodes in it. Suppose we are to answer a weighted level ancestor query for a node $u$ and an integer $\ell$. The path from the root of the trie to $u$ (which contains all the ancestors of $u$) traverses a subset of heavy paths. The size of this subset is $\Oh(\log x)$, because each time we switch paths the weight of the current node decreases by at least a factor of two. We iterate over this set of paths to find the path that contains the answer $u'$, and then use the binary search tree to find the location of $u'$ in the path. Both steps take $\Oh(\log x)$ time.
\end{proof}

\subsection{Linear space}\label{sec:LCP-linear}
As a warm-up we show a linear-space implementation of \PrefixSearch that improves the runtime of dictionary look-up queries to $\Oh(m/w + \log^{k+1} d + occ)$. Formally, we will show the following result.

\begin{theorem}\label{th:linear}
Assume a constant-size alphabet. For a dictionary $\D$ of $d > 2$ strings of length $m$, there is a data structure for dictionary look-up with $k$ mismatches that occupies $\Oh(wmd + w d \log^{k} d)$ bits of space and has query time $\Oh(m/w + \log^{k+1} d + occ)$, where $w = \Theta(\log md)$ is the size of a machine word.
\end{theorem}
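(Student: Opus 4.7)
By Lemmas~\ref{lm:kerrataspace} and~\ref{lm:kerratatime}, storing the $k$-errata tree of Section~\ref{sec:k-errata} costs $\Oh(wd\log^k d)$ bits and reduces a dictionary look-up to $\Oh(\log^k d)$ \PrefixSearch calls plus $\Oh(\log^k d+occ)$ other work. So the whole problem boils down to implementing \PrefixSearch in $\Oh(\log d)$ time per call, after an $\Oh(m/w)$-time preprocessing of the packed query $P$, within $\Oh(wmd)$ extra bits.

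\textbf{Static augmentation.} I would build the generalized suffix tree $GST$ of $\D$ equipped with the $\Oh(1)$-time LCA structure of Lemma~\ref{lm:lca}; this takes $\Oh(wmd)$ bits. Thanks to the modified construction of the $k$-errata tree, every leaf of every trie $\tau$ encodes a suffix of a dictionary string and therefore identifies with a leaf of $GST$. For each trie $\tau$ I would sort its leaves in $GST$-order, record for every $\tau$-node $u$ the resulting range of its subtree's leaves, and equip $\tau$ with the LCA and weighted-level-ancestor structures of Lemmas~\ref{lm:lca} and~\ref{lm:wla}. Lemma~\ref{lm:kerrataspace} bounds the aggregate per-trie overhead by $\Oh(wd\log^k d)$ bits.

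\textbf{Query preprocessing and \PrefixSearch.} Upon receiving the packed $P$, I would spend $\Oh(m/w)$ time to compute, for every position $i$, the locus $p_i$ in $GST$ of the longest prefix of $P[i..m]$ that occurs inside $\D$ --- a packed variant of matching statistics. With the $p_i$'s in hand, $\mathrm{LCP}(P[i..m],S[j..m])$ for any dictionary suffix $S[j..m]$ equals the $GST$-depth of $\mathrm{LCA}(p_i,\mathrm{leaf}(S[j..m]))$ and is computable in $\Oh(1)$. For a \PrefixSearch from $u\in\tau$ for $P[i..m]$, Fact~\ref{fct:prefixsearch} says the leaves of $u$'s subtree encode dictionary suffixes of length $|P[i..m]|$; unfolding the labels shows that the relevant comparisons are of the form $\mathrm{LCP}(P[i..m],S[i..m])$ for $S\in\D$, each of which is constant-time via $p_i$. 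A binary search on the sorted leaf range of $u$ therefore locates the two bracketing leaves in $\Oh(\log d)$ comparisons; from their LCPs with $p_i$ one reads off the matched depth $\ell$ and one further weighted-level-ancestor query on $\tau$ (Lemma~\ref{lm:wla}) converts $\ell$ into the explicit $\tau$-position --- all within $\Oh(\log d)$.

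\textbf{Totals and the main obstacle.} Plugging this into Lemma~\ref{lm:kerratatime} yields $\Oh(m/w)+\Oh(\log^k d)\cdot\Oh(\log d)+\Oh(\log^k d+occ)=\Oh(m/w+\log^{k+1}d+occ)$ query time and $\Oh(wmd+wd\log^k d)$ bits of space, as claimed. The genuinely delicate part is obtaining the matching-statistics data in $\Oh(m/w)$ rather than the classical $\Oh(m)$ time: I would run a Gusfield-style walk on $GST$ whose downward descents compare characters $w$ at a time through the packed representations of $P$ and of the concatenated dictionary, charging only $\Oh(1+\text{edge match}/w)$ per traversed $GST$-edge, so that the only unavoidable $\Omega(m/w)$ cost is the one already paid to scan the packed $P$.
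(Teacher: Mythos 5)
Your overall architecture matches the paper's: build the suffix tree $T(\Suff)$ of all dictionary suffixes with LCA support, reduce each \PrefixSearch in a trie $\tau$ of the $k$-errata tree to locating the query suffix among the (sorted) dictionary suffixes underlying $\tau$'s leaves, and finish with a weighted level ancestor query. Your binary search over the sorted leaf range of $u$ plays the same role as the paper's predecessor data structure on string ranks (Lemmas~\ref{lm:rootedLCP} and~\ref{lm:unrootedLCP}), and both cost $\Oh(\log d)$ per \PrefixSearch. So far the two routes are essentially the same.

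The gap is in your ``query preprocessing'' step, and it is genuine. You claim to compute, in $\Oh(m/w)$ time, the locus $p_i$ of the longest prefix of $P[i..m]$ occurring in $\D$ \emph{for every} $i\in\{1,\dots,m\}$. This is impossible on its face: merely writing out $m$ loci already takes $\Omega(m)$ time, which dominates $\Oh(m/w)$ as soon as $w=\omega(1)$. Even if you restrict attention to the $z=\Oh(\log^k d)$ suffix positions that are actually queried, a Gusfield/Chang--Lawler walk via suffix links is not $\Oh(m/w)$: the standard amortization shows that the number of \emph{edges} traversed by the skip/count re-descents is $\Theta(m)$ in the worst case, and each traversed edge costs at least $\Oh(1)$ independently of how many characters you pack per word. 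Packing the per-character comparisons into $w/\log\sigma$-letter blocks removes the character-comparison term but leaves the $\Theta(m)$ navigation term, so your ``charging $\Oh(1+\text{edge match}/w)$ per traversed edge'' argument yields $\Oh(m)$, not $\Oh(m/w)$.

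The paper's Lemma~\ref{lm:packedtrie} avoids exactly this. It first reorders the pending \PrefixSearch calls so that the queried suffixes $P[i_1,m],\dots,P[i_z,m]$ have nondecreasing starting positions (maintained with a priority queue, exploiting that all generated suffixes are suffixes of $P$ and the maximal length in the queue never increases). It then jump-starts the search for $P[i_j,m]$ from the answer for $P[i_{j-1},m]$: pick any leaf below the locus of $P[i_{j-1},\ell_{j-1}]$, move to the leaf of the correspondingly shortened dictionary suffix via a stored pointer, and issue one weighted level ancestor query to land at a boundary position of depth a multiple of $w/\log\sigma$. Because any two distinct queries share at most one $w/\log\sigma$-block in their subsequent forward descent, the block comparisons sum to $\Oh(m/w)$ and the per-query overhead is $\Oh(\log d)$, giving $\Oh(m/w+z\log d)$ overall. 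Your proposal lacks any such device for amortizing the inter-query jumps, so it does not establish the claimed $\Oh(m/w+\log^{k+1}d+occ)$ bound.
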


Let $\Suff$ be the set of all suffixes of the strings in $\D$. We build a compact trie $T(\Suff)$ on $\Suff$. (In the literature, $T(\Suff)$ is referred to as the suffix tree of $\D$.) As the total length of the strings in $\D$ is $md$, the size of $\Suff$ is $\Oh(md)$, and therefore it occupies $\Oh(wmd)$ bits of space. We can reduce \PrefixSearch queries on the tries of the $k$-errata tree to \PrefixSearch queries on $T(\Suff)$. We distinguish between \PrefixSearch queries that start at the root of some trie of the $k$-errata tree (\emph{rooted} queries), and those that start at some inner node or even a position on an edge of a trie of the $k$-errata tree (\emph{unrooted} queries). Note that unrooted queries are used in the case $k \ge 1$ only.

\begin{restatable}{lem}{rootedLCP}\label{lm:rootedLCP}
After $\Oh(wmd + w d \log^{k} d)$ bits of space preprocessing, we can answer a rooted \PrefixSearch query for a string $Q$ and any trie of the $k$-errata tree in $\Oh(\log d)$ time given the answer to a rooted \PrefixSearch for $Q$ in $T(\Suff)$.
\end{restatable}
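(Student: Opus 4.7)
The plan is to use the suffix tree $T(\Suff)$ as a global backbone and to translate \PrefixSearch in a trie $\tau$ of the $k$-errata tree into two lookups in $T(\Suff)$ by exploiting the standard property that, in a DFS-ordered compact trie, the leaf with longest common prefix with a given leaf $x$ among a subset $A$ of leaves is either the DFS-predecessor or the DFS-successor of $x$ in $A$. By Fact~\ref{fct:prefixsearch}, the strings at the leaves of $\tau$ all have the same length and are suffixes of dictionary strings, so each leaf of $\tau$ canonically corresponds to a node of $T(\Suff)$ (and, via the leftmost descendant, to a distinguished leaf of $T(\Suff)$).

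For the preprocessing, I would build $T(\Suff)$ equipped with the constant-time LCA structure of Lemma~\ref{lm:lca} and a DFS numbering of its leaves, using $\Oh(wmd)$ bits. For every trie $\tau$ of the $k$-errata tree I would additionally store (i) a balanced binary search tree holding the DFS numbers of the $T(\Suff)$-leaves corresponding to the leaves of $\tau$, and (ii) the weighted level ancestor structure of Lemma~\ref{lm:wla} on $\tau$. Each of these contributes $\Oh(w|\tau|)$ bits, so summing over all tries of the $k$-errata tree and using Lemma~\ref{lm:kerrataspace} the total overhead is $\Oh(wd\log^k d)$ bits, in line with the claimed space bound.

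For the query itself, let $p$ be the position in $T(\Suff)$ returned by the given rooted \PrefixSearch for $Q$ and let $x$ be a precomputed leaf descendant of $p$ in $T(\Suff)$. I would look up the predecessor $a$ and successor $b$ of $x$ in the sorted list of DFS numbers associated with $\tau$, compute the string-depths of $\mathrm{LCA}(x,a)$ and $\mathrm{LCA}(x,b)$ in $T(\Suff)$, and take $L=\min(\mathrm{depth}(p),\max(d_a,d_b))$; the neighbour attaining the maximum gives a leaf of $\tau$ whose LCP with $Q$ is exactly $L$. A weighted level ancestor query on $\tau$ with this leaf and depth $L$ then returns the requested position. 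Predecessor and weighted level ancestor cost $\Oh(\log|\tau|)=\Oh(\log d)$ and LCA is constant, so the total query time is $\Oh(\log d)$.

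The main subtlety I expect is bookkeeping rather than an algorithmic obstacle: I must carefully justify that the DFS-neighbour argument, which is usually stated for leaves, gives the correct answer when $p$ is a mid-edge position of $T(\Suff)$ and when the answer lies on an edge of $\tau$ (resolved by $\min$ with $\mathrm{depth}(p)$ and by the weighted level ancestor step, respectively), and that the correspondence between leaves of $\tau$ and nodes of $T(\Suff)$ is well-defined even when an entire suffix coincides with an internal node of $T(\Suff)$ (handled by using the leftmost descendant leaf as the canonical DFS key).
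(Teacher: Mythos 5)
Your proposal is correct and takes essentially the same route as the paper's proof: use $T(\Suff)$ as a backbone with constant-time LCA, associate with each trie $\tau$ of the $k$-errata tree a logarithmic-time predecessor/successor structure over the lexicographic (DFS) ranks of its leaves within $\Suff$ together with a weighted level ancestor structure on $\tau$, and answer a query by taking the two DFS-neighbours, computing their LCPs with $Q$ via LCA in $T(\Suff)$ (capped at the depth of the \PrefixSearch position), and finishing with one weighted level ancestor query. The only differences are cosmetic (balanced BST over DFS numbers of a chosen leaf descendant $x$ rather than the paper's predecessor structure over ranks applied to the predecessor of $Q$ in $\Suff$, and your explicit $\min$ with $\mathrm{depth}(p)$ which the paper leaves implicit).
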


\begin{restatable}{lem}{unrootedLCP}\label{lm:unrootedLCP}
Assume $k \ge 1$. After $\Oh(wmd + w d \log^{k} d)$ bits of space preprocessing, we can reduce an unrooted \PrefixSearch query for $Q$ that starts at a position $u$ of a trie $\tau$ of the $k$-errata tree to a rooted \PrefixSearch for some suffix $Q'$ of $Q$ in a trie $\tau'$ of a $(k-1)$-errata tree in $\Oh(\log d)$ time given the answer to a rooted \PrefixSearch query for $Q$ in $T(\Suff)$. 
\end{restatable}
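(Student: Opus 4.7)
The plan is to exploit that an unrooted \PrefixSearch from $u$ in $\tau$ first descends along the heavy path $H \ni u$, matching the label of $H$ against $Q$, and then either stops, exhausts $Q$, or, at a node $w \in H$, leaves $H$ via a non-heavy child $c$. In the last case the rest of the search is precisely a rooted \PrefixSearch in the top trie of the horizontal $(k-1)$-errata tree at the weight-balanced-tree leaf representing $c$, which the $k$-errata tree already builds; hence the reduction boils down to locating $w$ and, if applicable, $c$.

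I would augment the structure with the following: for each leaf $\lambda$ of every trie, a pointer $(T_\lambda, j_\lambda)$ to its underlying dictionary suffix; for each heavy path, a pointer to its bottom leaf; a table mapping a pair $(T,i)$ to the leaf of $T(\Suff)$ representing $T[i,m]$; and at each node $v$, a static dictionary (Lemma~\ref{lm:dictionary}) keyed on the first letter of each non-heavy child's edge, pointing to the corresponding WBT-leaf of $v$'s horizontal $(k-1)$-errata structure. These additions cost $\Oh(wmd + wd\log^k d)$ bits in total. Let $p(Q)$ denote the supplied endpoint of the rooted \PrefixSearch for $Q$ in $T(\Suff)$. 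I would fetch the bottom leaf $\lambda$ of $H$, use the table to obtain the $T(\Suff)$-leaf $\lambda'$ of the suffix of $\lambda$'s string obtained by skipping the first $|L(u)|$ characters, and invoke Lemma~\ref{lm:lca} in $\Oh(1)$ time to compute the depth $\ell$ of the LCA of $p(Q)$ and $\lambda'$ in $T(\Suff)$. Since the label of $p(Q)$ is the longest prefix of $Q$ present in $T(\Suff)$ and the label of $H$ below $u$ is itself a dictionary suffix, $\ell$ equals the length of the longest common prefix of $Q$ with the label of $H$ below $u$. A weighted level ancestor query on $\tau$ (Lemma~\ref{lm:wla}) then locates in $\Oh(\log d)$ time the position $u^\star$ on $H$ at depth $|L(u)|+\ell$.

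If $u^\star$ is interior to an edge of $H$, or $\ell = |Q|$, or $u^\star = w$ is a node and the dictionary at $w$ has no entry for $Q[\ell+1]$, the unrooted \PrefixSearch terminates at $u^\star$ and the reduction is to a trivial (empty) rooted \PrefixSearch. Otherwise the dictionary at $w$ returns the unique non-heavy child $c$ whose edge begins with $Q[\ell+1]$, and the reduction is a rooted \PrefixSearch for $Q' = Q[\ell+2, |Q|]$ in the top trie $\tau'$ of the $(k-1)$-errata tree hanging from the WBT-leaf of $c$. The main obstacle I expect is the length alignment between $Q'$ and the strings in $\tau'$: by construction those strings are dictionary suffixes from $c$'s subtree with a prefix of length $|L(w)|+1 = |L(u)|+\ell+1$ removed, and Fact~\ref{fct:prefixsearch} requires that this common length equal $|Q'|$; the bookkeeping above precisely forces $Q'$ to be $Q[\ell+2, |Q|]$ and makes the reduction well-defined.
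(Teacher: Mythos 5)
Your proposal follows the same high-level route as the paper: compute $\ell = \mathrm{LCP}(Q,S)$ by one lowest common ancestor query in $T(\Suff)$ (pairing the supplied endpoint $p(Q)$ with a $T(\Suff)$-leaf obtained from the bottom leaf of the heavy path), jump to the position at depth $|L(u)|+\ell$ on the heavy path by a weighted level ancestor query on $\tau$, locate in $\Oh(1)$ time the non-heavy child $c$ whose edge begins with $Q[\ell+1]$, and hand off to a rooted \PrefixSearch. The mechanism for obtaining $\ell$ (via a pointer to the bottom leaf of the heavy path and the table mapping $(T,i)$ to a $T(\Suff)$-leaf) and the per-node constant-time lookup for the correct non-heavy child are both fine and stay within the stated space bound. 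One cosmetic difference: you explicitly target the compact trie of the horizontal $(k{-}1)$-errata tree at the WBT-leaf of $c$ with $Q' = Q[\ell+2,|Q|]$, whereas the paper more loosely writes "the subtree rooted at this node, which belongs to a $(k-1)$-errata tree" and uses $Q' = Q[\ell+1,m]$; your identification of $\tau'$ and the accompanying length check via Fact~\ref{fct:prefixsearch} is more careful and, in my view, cleaner. The one step you omit that the paper includes is translating the answer from $\tau'$ back into a position of $\tau$: since the unrooted query was posed in $\tau$, the final answer must be a position of $\tau$, and because $\tau'$ is a separately built trie, you need one more weighted level ancestor query on $\tau$ (anchored at a leaf of $\tau$ below $c$, at target depth $|L(w)|+1+d'$ where $d'$ is the depth returned in $\tau'$). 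This is a small, easily repaired omission; the core of the reduction is correct and matches the paper's proof.
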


Lemmas~\ref{lm:rootedLCP} and~\ref{lm:unrootedLCP} were proved in~\cite{kerratatree}. For completeness, we give their proofs in Appendix~\ref{sec:appendix}.  Suppose we are to answer a dictionary look-up with $k$ mismatches for a string~$P$. Our algorithm traverses the $k$-errata tree and generates rooted and unrooted \PrefixSearch queries. We maintain a priority queue. Each time we need an answer to a \PrefixSearch for a string $S$ in $T(\Suff)$, we add $S$ to the priority queue. At each step of the algorithm we extract the longest string from the queue and answer the \PrefixSearch query for it. Notice that all strings in the queue are suffixes of $P$ and that the maximal length of strings in the queue cannot increase. We can therefore assume that we must answer \PrefixSearch queries for the suffixes of $P$ starting at positions $1 = i_1 \le i_2 \le \dots \le i_z$, where $z = \Oh(\log^k d)$. 

Bille, G{\o}rtz, and Skjoldjensen~\cite{packedtextindex} showed that we can preprocess  $T(\Suff)$ in linear space to answer \PrefixSearch queries for a single query string of length $m$ in $\Oh(m/w+\log \log md)$ time. As an immediate corollary we obtain that we can answer $z$ \PrefixSearch queries in $z \cdot \Oh(m/w+\log \log md)$ time, but this is too slow for our purposes. Below we develop their ideas to give a more efficient approach.

\begin{lemma}~\label{lm:packedtrie}
$T(\Suff)$ can be preprocessed in $\Oh(wmd)$ bits of space to answer \PrefixSearch for the suffixes of $P$ starting at positions $1 = i_1 \le i_2 \le \dots \le i_z$ in $\Oh(m/w + z \log d)$ time.
\end{lemma}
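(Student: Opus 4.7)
The plan is to extend the packed-trie prefix search of Bille, Gørtz, and Skjoldjensen~\cite{packedtextindex} so that a batch of $z$ queries at non-decreasing positions $1 = i_1 \le \dots \le i_z$ of the packed string $P$ shares the $O(m/w)$ scanning cost instead of paying it for each query.

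\textbf{Preprocessing.} I equip $T(\Suff)$ with a heavy path decomposition, packed representations of the concatenated node labels on each heavy path (so that packed LCP between a packed query substring and a suffix of a heavy path's label takes $O(1+\mathrm{LCP}/w)$ time), LCA queries (Lemma~\ref{lm:lca}), and weighted level ancestor queries (Lemma~\ref{lm:wla}). Because the alphabet is of constant size, each node stores a constant-size child table, so moving to the correct child at a heavy-path switch costs $O(1)$. All structures fit in $O(wmd)$ bits.

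\textbf{Single-query procedure.} Following Bille et al., a packed prefix search of a string $Q$ descends $T(\Suff)$ one heavy path at a time, using packed LCP on each heavy path and the child table to switch paths; the search terminates as soon as $Q$ is exhausted or the packed LCP on the current heavy path falls short of the path itself. Since the number of heavy-path switches on any root-to-leaf descent is $O(\log d)$, a single query runs in $O(|Q|/w + \log d)$.

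\textbf{Batching the queries.} Processing queries in order of $j$, I maintain a cursor $(v, p)$ where $v$ is a position in $T(\Suff)$ and $p$ is the largest index of $P$ already consumed by the cursor in a successful packed match. For query $j$ I first use one weighted level ancestor query on the root-to-$v$ path to locate, in $O(\log d)$ time, the deepest ancestor $v_j$ of $v$ whose depth equals the length of the prefix of $P[i_j..m]$ already matched; then I resume packed matching from $v_j$, extending into positions of $P$ past $p$ and updating the cursor accordingly. Each word of $P$ is consumed a bounded number of times in total, giving $O(m/w)$ packed-matching work across the whole procedure, while each query contributes an additive $O(\log d)$ for its weighted level ancestor query and the $O(\log d)$ heavy-path switches made after the jump; the total time is thus $O(m/w + z\log d)$.

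The main obstacle I expect is correctness of the weighted-level-ancestor jump between consecutive queries: I need to show that the deepest ancestor of $v$ whose depth equals the already-matched prefix length of $P[i_j..m]$ is indeed the right position to resume matching from. This rests on the invariant, maintained throughout the algorithm, that the label of the cursor's current node is always a substring of $P$ ending at position $p$, combined with the fact that in the compact trie $T(\Suff)$ the path labelled by any given string is unique; together these force the ancestor identified by the weighted level ancestor to coincide with the locus of the currently-matched prefix of $P[i_j..m]$.
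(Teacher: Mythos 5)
Your preprocessing (heavy paths plus packed labels rather than the paper's boundary-node micro-trees with per-micro-tree dictionaries) is a plausible alternative, but the batching step has a genuine gap that breaks correctness. Between consecutive queries you propose to find the resume position by a weighted level ancestor query ``on the root-to-$v$ path,'' i.e.\ an \emph{ancestor} of the cursor position $v$. But if $v$ is the locus of $P[i_{j-1},p]$, its ancestors have labels of the form $P[i_{j-1},\ell]$ for $\ell<p$ — they are \emph{prefixes} of $v$'s label, all starting at position $i_{j-1}$. What you actually need to reach is the locus of (a prefix of) $P[i_j,p]$, which is a \emph{suffix} of $v$'s label and in general lies nowhere on the root-to-$v$ path. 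The invariant you state — that $v$'s label is a substring of $P$ ending at $p$ — is true but does not make the WLA land where you want: it would only do so if $i_j=i_{j-1}$.

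The missing idea, and the crux of the paper's proof, is to exploit that $\Suff$ is closed under taking suffixes of its members. From the termination position of query $j-1$, take any leaf below it with label $S\in\Suff$; then $S[i_j-i_{j-1}+1,|S|]$ is again a suffix in $\Suff$, so it has its own leaf in $T(\Suff)$ (locatable in $O(1)$ time with a precomputed pointer), and a weighted level ancestor query on \emph{that} leaf — not on $v$ — recovers a boundary ancestor at depth roughly $p-i_j+1$, which is exactly the locus of the already-seen prefix of $P[i_j,m]$. Without this suffix-to-leaf hop, the jump does not work, and the claimed $O(m/w)$ shared scanning cost is unjustified (each query could be forced to restart from near the root). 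You would also need to be careful that the resume point is block-aligned so that successive queries overlap in at most one machine word, which the paper guarantees by jumping to the deepest boundary ancestor; your version does not control this alignment.
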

\begin{proof}
We assume that the strings in the dictionary are stored in the packed form. By construction, each edge of a trie of the $k$-errata is labelled by a substring of a dictionary string. It means that we can store each label as three integers: the id of the string, and the starting and the ending positions of the substring. Next, we preprocess $T(\Suff)$ for weighted level ancestor queries (Lemma~\ref{lm:wla}). A node or a position in the trie is called \emph{boundary} if the length of its label is a multiple of $w/\log \sigma$, where $w$ is the size of a machine word and $\sigma$ is the size of the alphabet. Boundary nodes cut the tree into micro-trees. We only consider the micro-trees containing more than two nodes. We define the label of a leaf of a micro-tree as a machine word that contains a packed representation of the string written on the path from the root of the micro-tree to the leaf. The labels can be treated as integers; for each micro-tree we create a dictionary (Lemma~\ref{lm:dictionary}) and a predecessor data structure on the labels of its leaves. We also preprocess each micro-tree for lowest common ancestors. Note that the total size of the micro-trees is $\Oh(md)$, as each edge of $T(\Suff)$ contains at most two nodes of the micro-trees. Therefore, the preprocessing requires $\Oh(wmd)$ bits of space.

We now explain how to answer the \PrefixSearch queries for the suffixes of $P$ starting at the positions $1= i_1 \le i_2 \le \dots \le i_z$. For $i_1$, we start at the root of $T(\Suff)$. For $i_j$, $j > 1$, we use the information obtained at the previous step. Namely, suppose that the \PrefixSearch for $P[i_{j-1},m]$ terminated at a position labelled by $P[i_{j-1},\ell_{j-1}]$. We take any leaf below this position, let it be labelled by a string $S \in \Suff$. Let $P[i_j, \ell'_{j-1}]$ be the longest prefix of $P[i_{j},\ell_{j-1}]$ such that its length is a multiple of $w/\log \sigma$. We then start the \PrefixSearch from a position $u$ labelled by $P[i_j, \ell'_{j-1}]$. To find the position $u$, we first find the leaf labelled by $S[i_j-i_{j-1}+1,|S|] \in \Suff$, and then use a weighted level ancestor query to jump to $u$ in $\Oh(\log d)$ time. Notice that $u$ is boundary. If $u$ is not a root of a micro-tree, it has a single outgoing edge of length at least $w/\log\sigma$. We compare the first $w/\log \sigma$ letters of the label of this edge and $P[\ell'_{j-1}+1, \ell'_{j-1}+w/\log \sigma]$ in $\Oh(1)$ time by comparing the corresponding machine words. If they are equal, we continue from the next boundary node on the edge in a similar manner. Otherwise, we find the first mismatch between the two strings in $\Oh(1)$ time as follows: First, compute a bitwise XOR of the two strings, and then locate the most significant bit using the technique of~\cite{bitwise}. 

If $u$ is the root of a micro-tree $\tau$, we search for $P[\ell'_{j-1}+1, \ell'_{j-1}+w/\log \sigma]$ in the dictionary of~$\tau$. If it is in the dictionary and corresponds to a leaf $v$, we continue to $v$. Otherwise, we find its predecessor $\pred$ and successor $\successor$ using the predecessor data structure. The \PrefixSearch must terminate either on the path from $u$ to the leaf of the micro-tree labelled by $\pred$, or on the path from $u$ to the leaf of the micro-tree labelled by $\successor$. We compute the longest common prefix of $P[\ell'_{j-1}+1, \ell'_{j-1}+w/\log \sigma]$ with $\pred$ and with $\successor$ using bitvector operations in $\Oh(1)$ time as explained above, take the longest of the two, and find the position labelled by it in $\Oh(\log d)$ time using a weighted level ancestor query.  

The running time of each prefix search query is proportional to the number of $(w/\log\sigma)$-length blocks of $P$ that we compare with the labels of the edges of $T(\Suff)$. Notice that each two different \PrefixSearch queries share at most one block of letters. Therefore, as the size of the alphabet $\sigma$ is constant, the total running time of $z$ \PrefixSearch queries is $\Oh(m/w + z\log d)$. 
\end{proof}

Lemmas~\ref{lm:kerrataspace},~\ref{lm:kerratatime},~\ref{lm:rootedLCP},~\ref{lm:unrootedLCP}, and~\ref{lm:packedtrie} give Theorem~\ref{th:linear}.

\subsection{Entropy-bounded space}\label{sec:LCP-entropy}
In this section we improve the space requirements of our implementation of \PrefixSearch and show the following theorem.


\begin{theorem}\label{th:entropy}
Assume a constant-size alphabet. For a dictionary of $d > 2$ strings of lengths $m$ and any $q = o(\log md)$, let $H_q$ be the $q$-th empirical entropy of the concatenation of all the dictionary strings. There is a data structure for dictionary look-ups with $k$ mismatches that uses $md H_q + o(md) + \Oh(w d \log^{k} d)$ bits of space and has query time $\Oh(m/w + \log^{k+1} d + occ)$, where $w = \Theta(\log md)$ is the size of a machine word.
\end{theorem}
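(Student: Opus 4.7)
My plan is to replace the $\Oh(wmd)$-bit suffix tree $T(\Suff)$ used in the proof of Theorem~\ref{th:linear} by two lighter components. First, I would store the concatenation of the dictionary strings in a Ferragina--Venturini style compressed representation, occupying $md H_q + o(md)$ bits and supporting $\Oh(1)$-time access to any $\Oh(w)$-bit window; this provides packed access to any substring of a dictionary string, addressed by (string id, offset). Second, I would build a compact trie $T'$ on the subset $\Suff' \subseteq \Suff$ obtained as follows: for every truncated suffix $D[i, i+\delta-1]$ stored in some trie of the $k$-errata tree, put the full dictionary suffix $D[i, |D|]$ into $\Suff'$. By Lemma~\ref{lm:kerrataspace} we have $|\Suff'| = \Oh(d\log^k d)$, so the topology of $T'$ together with its LCA and weighted-level-ancestor decorations (Lemmas~\ref{lm:lca} and~\ref{lm:wla}) take $\Oh(wd\log^k d)$ bits; edges carry only triples (string id, start, end) that index into the compressed text.

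The main step is a reduction that routes each \PrefixSearch in some trie $\tau$ of the $k$-errata tree through $T'$ rather than through $T(\Suff)$. By Fact~\ref{fct:prefixsearch} the strings in $\tau$ all have some common length $\delta$ and are prefixes of dictionary suffixes that, by construction, belong to $\Suff'$; hence the leaves of $T'$ arising from $\tau$ form a precomputed subset $L_\tau$. To answer a rooted \PrefixSearch in $\tau$ for a suffix $Q$ of the query, I would first run the rooted \PrefixSearch for $Q$ in $T'$ (truncated to depth $\delta$, restricted to the range of $L_\tau$), and then project the result into $\tau$ by a weighted level ancestor query exactly as in Lemma~\ref{lm:rootedLCP}; unrooted queries are reduced to rooted ones via Lemma~\ref{lm:unrootedLCP}. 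To execute \PrefixSearch in $T'$ efficiently I would port the micro-tree construction of Lemma~\ref{lm:packedtrie} to $T'$: the micro-trees, their dictionaries and their predecessor structures involve $\Oh(|\Suff'|) = \Oh(d\log^k d)$ nodes in total and thus fit in $\Oh(wd\log^k d)$ bits, while each $(w/\log\sigma)$-block comparison is implemented in $\Oh(1)$ time by reading the relevant window from the compressed dictionary. Summing the two contributions yields $md H_q + o(md) + \Oh(wd\log^k d)$ bits, and the query time analysis of Lemma~\ref{lm:packedtrie} gives $\Oh(m/w + z\log d)$ for $z = \Oh(\log^k d)$ prefix searches, which with Lemmas~\ref{lm:kerrataspace} and~\ref{lm:kerratatime} delivers the claimed $\Oh(m/w + \log^{k+1} d + occ)$.

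The hard part will be justifying that the small subset $\Suff'$ is genuinely sufficient, i.e.\ proving the new reduction. The delicate point is that tries of the $k$-errata tree store \emph{truncated} suffixes with relabelled mismatch credits, and the projection from $T'$ back into a specific trie $\tau$ must correctly track where inside each dictionary suffix we are; this relies on the remark in Section~\ref{sec:k-errata} that edge labels in $k$-errata tries are substrings of dictionary strings, so that every position in every trie corresponds to a dictionary (string id, offset) pair known at preprocessing time, which in turn pairs with a unique leaf of $T'$. The bulk of the proof will consist in verifying that this pairing is preserved by the recursive $k$-errata construction and that LCA queries in $T'$ restricted to $L_\tau$ return exactly the depth of match inside $\tau$, so that nothing is lost by replacing the dense $T(\Suff)$ with the sparse $T'$.
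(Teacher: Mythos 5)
Your plan correctly reuses the Ferragina--Venturini representation for the text and correctly bounds the space of a trie on your proposed $\Suff'$, but the choice of $\Suff'$ breaks the crucial efficiency argument in Lemma~\ref{lm:packedtrie}, and this is a genuine gap, not merely a ``hard part to verify.''

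Recall how the batch of \PrefixSearch queries is made to cost $\Oh(m/w + z\log d)$ rather than $\Oh(z\, m/w)$: after the $(j{-}1)$st search for $P[i_{j-1},m]$ ends at depth $\ell_{j-1}$, you take some leaf $S$ below the endpoint and jump to the leaf labelled $S[i_j - i_{j-1}+1, |S|]$, then use a weighted level ancestor query to warm-start the search for $P[i_j,m]$ near depth $\ell_{j-1}-(i_j-i_{j-1})$. For that jump to be possible, the trie you are searching must actually contain a leaf for $S[i_j-i_{j-1}+1,|S|]$. In the full suffix tree $T(\Suff)$ of Theorem~\ref{th:linear} this is free. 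In the paper's small trie it holds because $\Suff'$ is sampled at \emph{regular} positions $t\cdot w\log d/\log\sigma$: all the offsets $i_j$ at which rooted searches are issued on $T(\Suff')$ are sampled positions, so $i_j-i_{j-1}$ is a multiple of the sampling step, so $S[i_j-i_{j-1}+1,|S|]$ is again a sampled suffix of the same dictionary string. Your $\Suff'$ is instead \emph{data-adaptive}: it contains exactly the dictionary suffixes $D[i,|D|]$ whose offsets $i$ occur as trie offsets in the $k$-errata tree. There is no reason that, if $D[i_{j-1},|D|]\in\Suff'$ and $i_j$ is some other trie offset (coming from an unrelated branch of the recursion), the shifted suffix $D[i_j,|D|]$ is also in $\Suff'$; the recursion truncates along heavy paths of different sub-tries, so the set of offsets attached to a fixed $D$ has no translation-invariance. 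Once the warm-start leaf may be missing, each of the $z=\Theta(\log^k d)$ searches must be answered from scratch in $\Theta(m/w)$ time, giving $\Theta(\log^k d\cdot m/w)$ overall, which overshoots the claimed $\Oh(m/w+\log^{k+1}d+occ)$.

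The paper's fix is precisely the thing you tried to avoid: it keeps $\Suff'$ as the regularly sampled suffixes (size $\Theta(md\log\sigma/(w\log d))$, whose trie still fits in the $o(md)$ budget because $w=\Theta(\log md)$), which preserves the closure-under-shift needed by Lemma~\ref{lm:packedtrie}, and then introduces a separate small trie $T_{\mathrm{heads}}$ per $k$-errata trie to bridge the at most $w\log d/\log\sigma$ characters between a trie offset $\ell$ and the next sampled position $\ell'$ (Lemmas~\ref{lm:rootedLCP-2} and~\ref{lm:unrootedLCP-2}). The heads are short enough that $T_{\mathrm{heads}}$ costs only $\Oh(w d\log^k d)$ bits and $\Oh(\log d)$ time per query, so the overall bounds survive. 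Your instinct to shrink $\Suff'$ to size $\Oh(d\log^k d)$ is appealing space-wise, but it trades away the arithmetic structure that the amortisation of the packed prefix search crucially depends on; if you want to keep a small, data-adaptive $\Suff'$, you would need a different mechanism (essentially explicit suffix links between the relevant leaves across all offset pairs arising in the recursion) and you would then have to bound the number of such links, which is not addressed in the proposal.
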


There are two bottlenecks: First, we need to store the dictionary strings, and second, the tree structure of $T(\Suff)$ requires $\Omega(md)$ space. To overcome the first bottleneck, we replace the packed representation of the dictionary strings by the Ferragina-Venturini representation:

\begin{lemma}[\cite{entropy}]\label{lm:entropy_scheme}
Under the assumption of an alphabet of constant size $\sigma$, for any $q = o(\log md)$ there exists a data structure that uses $md H_q + o(md)$ bits of space and supports constant-time access to any $w/\log \sigma = \Theta(\log md)$-length substring of a dictionary string. 
\end{lemma}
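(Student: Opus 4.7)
I follow the two-step roadmap sketched in the introduction. The construction behind Theorem~\ref{th:linear} has two $\Omega(md)$-bit components: the packed dictionary, and the full suffix tree $T(\Suff)$. I attack them separately and then recombine the pieces.

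For the dictionary I replace the packed storage used inside Lemma~\ref{lm:packedtrie} by the Ferragina--Venturini encoding of Lemma~\ref{lm:entropy_scheme}. The only way the machinery of Section~\ref{sec:LCP-linear} ever touches a dictionary string is through constant-time fetches of $\Theta(\log md)$-length substrings, exactly the interface provided by Lemma~\ref{lm:entropy_scheme}. This swap is therefore transparent and brings the string-storage cost down from $\Theta(wmd)$ bits to $mdH_{q}+o(md)$ bits, without affecting correctness or any time bound.

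For the tree I replace $T(\Suff)$ by a compact trie $T^{*}$ on the much smaller subset $\Suff^{*}\subseteq\Suff$ consisting of the suffixes of the dictionary strings that actually appear as labels of leaves of some trie of the $k$-errata tree. By Lemma~\ref{lm:kerrataspace}, $|\Suff^{*}|=\Oh(d\log^{k}d)$, so $T^{*}$ has $\Oh(d\log^{k}d)$ nodes. Equipping $T^{*}$ with the micro-tree decomposition of Lemma~\ref{lm:packedtrie}, together with per-micro-tree dictionaries (Lemma~\ref{lm:dictionary}), predecessor trees, the structure of Lemma~\ref{lm:lca}, and the structure of Lemma~\ref{lm:wla}, costs only $\Oh(wd\log^{k}d)$ bits, which is absorbed by the last term of the claimed space bound. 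Edge labels of $T^{*}$ are not stored explicitly; they are represented as (dictionary-string-id, offset) pairs and read on demand through the Ferragina--Venturini structure.

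The key new technical step is to redo Lemmas~\ref{lm:rootedLCP} and~\ref{lm:unrootedLCP} with $T^{*}$ in place of $T(\Suff)$. Since $\Suff^{*}$ is exactly the union of the strings stored in all tries of the $k$-errata tree, the \PrefixSearch answer for any suffix of $P$ inside any individual trie $\tau$ of the $k$-errata tree is always an ancestor of the corresponding \PrefixSearch answer in $T^{*}$; I associate to every node of every $\tau$ a pointer to its twin in $T^{*}$ (using $\Oh(wd\log^{k}d)$ bits in total), and translate \PrefixSearch answers between $T^{*}$ and $\tau$ in $\Oh(\log d)$ time using lowest common ancestor and weighted level-ancestor queries on $T^{*}$, mirroring the proofs of Lemmas~\ref{lm:rootedLCP}--\ref{lm:unrootedLCP}. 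The main obstacle will be checking that $\Suff^{*}$, which has been ``shaped'' by the recursive truncations of the $k$-errata construction and does not contain all of $\Suff$, is still rich enough that every answer reachable by a descent in some $\tau$ can be read off from the $T^{*}$ answer; intuitively this holds because every string that a \PrefixSearch can ever hit in some $\tau$ is, by construction of $\Suff^{*}$, present in $T^{*}$. Once the reduction is in place, running the Lemma~\ref{lm:packedtrie} algorithm on $T^{*}$ answers the $z=\Oh(\log^{k}d)$ \PrefixSearch queries produced by one dictionary look-up in $\Oh(m/w+z\log d)$ time, and combining with Lemma~\ref{lm:kerratatime} gives the bound $\Oh(m/w+\log^{k+1}d+occ)$ of Theorem~\ref{th:entropy}.
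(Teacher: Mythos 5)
What you have written is an attempt at the construction behind Theorem~\ref{th:entropy}, not a proof of Lemma~\ref{lm:entropy_scheme} itself (that lemma is a black box imported from~\cite{entropy}). Reviewing it on those terms: your first step, swapping packed storage for the Ferragina--Venturini encoding, matches the paper and is fine. The second step diverges from the paper and contains a genuine gap.

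You replace $T(\Suff)$ by a trie $T^{*}$ over $\Suff^{*}$, the set of leaf labels of the $k$-errata tries, reasoning that $|\Suff^{*}|=\Oh(d\log^{k}d)$ so $T^{*}$ fits in $\Oh(wd\log^{k}d)$ bits. The difficulty is that $\Suff^{*}$ is not closed under taking further suffixes: the truncation lengths in the $k$-errata construction are arbitrary node depths (values $\delta+1$ read off the heavy paths and weight-balanced trees), so if $S\in\Suff^{*}$ there is no reason at all for $S[\Delta+1,|S|]$ to lie in $\Suff^{*}$. But that closure is exactly what the chaining step of Lemma~\ref{lm:packedtrie} needs: after the \PrefixSearch for $P[i_{j-1},m]$ terminates at a position with some leaf $S$ below it, the algorithm locates the leaf labelled $S[i_{j}-i_{j-1}+1,|S|]$ and jumps to the start of the next query with a weighted level ancestor query. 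In $T(\Suff)$ that leaf always exists; in your $T^{*}$ it generally does not, and without it each of the $z=\Oh(\log^{k}d)$ queries has to re-descend from the root, which costs $\Omega(m\log^{k}d / w)$ in total and breaks the time bound. Note that your own stated worry (``is $\Suff^{*}$ rich enough that every answer reachable by a descent in some $\tau$ can be read off from $T^{*}$?'') is aimed at the wrong spot: strings reachable by descent in $\tau$ are in $\Suff^{*}$ by definition; what is missing is the \emph{shifted} leaf needed to hand the shared block pointer from one query to the next.

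The paper's construction is built precisely to evade this obstruction. It samples $\Suff'$ at positions that are multiples of $w\log d/\log\sigma$, which makes $\Suff'$ closed under shifts by such multiples and keeps $|\Suff'|$ small enough that $T(\Suff')$ costs only $\Oh(md/\log d)=o(md)$ bits. It then restricts the chained queries to these sampled positions, and bridges from the true entry position of each $k$-errata trie $\tau$ down to the nearest sampled position via a small auxiliary trie $T_{heads}$ attached to $\tau$, with total size $\Oh(wd\log^{k}d)$ bits (Lemmas~\ref{lm:rootedLCP-2} and~\ref{lm:unrootedLCP-2}). That $T_{heads}$ layer is the part your proposal omits, and it is load-bearing: it is what reconciles the irregular entry offsets of the $k$-errata tries with the regular sampling that makes the shared-block chaining of Lemma~\ref{lm:packedtrie} possible.
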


If $d > 2$ is a constant, if suffices to store the Ferragina-Manzini representation of the dictionary strings to obtain the bounds of Theorem~\ref{th:entropy}. Indeed, when a query string $P$ arrives, we can decide if the Hamming distance between $P$ and a dictionary string is at most $k$ in $\Oh(m/w+k) = \Oh(m/w + \log^k d)$ time, using comparison by machine words and bitvector operations. As $d$ is constant, we obtain the desired time bound. Below we assume that $d = \Omega(1)$.

We now deal with the second bottleneck. We will consider a smaller trie $T(\Suff')$ on a subset $\Suff'$ of $\Suff$, and will show that \PrefixSearch queries on tries of the $k$-errata tree can be reduced to \PrefixSearch queries on this trie. $\Suff'$ is defined to be the set of all suffixes of the dictionary strings that start at positions $w \log d / \log \sigma$, $2 w \log d / \log \sigma$, and so on. We call such suffixes \emph{sampled}. Below we show that we can reduce \PrefixSearch queries in the tries of the $k$-errata tree to \PrefixSearch queries in $T(\Suff')$.

\begin{lemma}\label{lm:rootedLCP-2}
After $\Oh(md / \log d + w d \log^{k} d)$ bits of space preprocessing, we can answer a rooted \PrefixSearch query for a string $Q = P[\ell,m]$ in $\Oh(\log d)$ time given the answer to a rooted \PrefixSearch query for a string $P[\ell',m]$ in $T(\Suff')$, where $\ell' \ge \ell$ is the smallest multiple of $w \log d / \log \sigma$. 
\end{lemma}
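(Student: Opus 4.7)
The plan is to adapt the reduction of Lemma~\ref{lm:rootedLCP} to the sampled suffix trie $T(\Suff')$, paying explicit attention to the offset $\delta = \ell' - \ell < w \log d / \log \sigma$ between the query string $Q = P[\ell,m]$ and the outer PrefixSearch string $P[\ell',m]$. By Fact~\ref{fct:prefixsearch}, every string in $\tau$ has the form $s[\ell,m]$ for some dictionary string $s$, so its sampled companion $s[\ell',m]$ is a node of $T(\Suff')$; this induces a canonical mapping $\phi$ from leaves of $\tau$ into $T(\Suff')$.

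For preprocessing I would store $T(\Suff')$ in $\Oh(md/\log d)$ bits (it has $\Oh(md \log \sigma /(w \log d))$ sampled suffixes, each contributing an $\Oh(w)$-bit node) and equip it with the $\Oh(1)$-time LCA structure of Lemma~\ref{lm:lca} and the $\Oh(\log d)$-time weighted level ancestor structure of Lemma~\ref{lm:wla}. On every trie $\tau$ of the $k$-errata tree I install the same LCA and WLA primitives together with the Lemma~\ref{lm:packedtrie}-style micro-tree machinery, so that an internal PrefixSearch on $\tau$ for a string of length $\Oh(\delta)$ can be answered in $\Oh(\log d)$ time. In addition, each leaf of $\tau$ stores the pointer $\phi(\cdot) \in T(\Suff')$, and each node $u$ of $\tau$ stores a precomputed value $A_u$, defined as the LCA in $T(\Suff')$ of the $\phi$-images of the leaves descending from $u$. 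By Lemma~\ref{lm:kerrataspace}, the total number of nodes across all tries is $\Oh(d \log^k d)$, so this per-trie overhead fits into $\Oh(w d \log^k d)$ extra bits.

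At query time, given the supplied PrefixSearch position $p'$ of depth $L'$ in $T(\Suff')$, I would first run the micro-tree PrefixSearch on $\tau$ for the short string $P[\ell,\ell'-1]$ of length $\delta$; by the analysis of Lemma~\ref{lm:packedtrie} this costs $\Oh(\delta / w + \log d) = \Oh(\log d)$ time and yields a position $u_1$ in $\tau$ of depth $L_1 \le \delta$. If $L_1 < \delta$, the match already terminates inside the short prefix and $u_1$ is the answer. Otherwise $u_1$ sits at depth exactly $\delta$ and the match may extend: I compute $A^*$, the LCA in $T(\Suff')$ of $A_{u_1}$ and $p'$, in $\Oh(1)$ time, set $L'' = \mathrm{depth}(A^*)$, and use a single WLA query on $\tau$ starting from any leaf below $u_1$ to return the position at depth $\delta + L''$. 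The hardest part will be arguing that $\delta + L''$ is exactly the optimal PrefixSearch depth in $\tau$: I would upper-bound the LCP of every $s[\ell,m] \in \tau$ with $Q$ by $\delta + L'$, using that $p'$ certifies the largest LCP between $P[\ell',m]$ and any sampled suffix, and then a case analysis on whether $A_{u_1}$ is an ancestor, a descendant, or incomparable with $p'$ in $T(\Suff')$ converts this bound into $\delta + \mathrm{depth}(A^*)$, realized by any leaf below $u_1$ whose $\phi$-image lies in the subtree of $A^*$.
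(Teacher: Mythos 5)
The setup — build $T(\Suff')$ with LCA/WLA, do a packed prefix search on $\tau$ for the short head $P[\ell,\ell'-1]$, then extend past depth $\delta$ using the given position $p'$ in $T(\Suff')$ — matches the paper's plan (the paper uses a separate compact trie $T_{heads}$ of heads rather than micro-tree machinery installed directly on $\tau$, but that is a cosmetic difference). The genuine gap is in the extension step: storing only $A_{u_1}$, the LCA in $T(\Suff')$ of all $\phi$-images of leaves below $u_1$, and returning depth $\delta + \mathrm{depth}(\mathrm{LCA}(A_{u_1},p'))$, is incorrect. Your own case analysis hides the failure: when $A_{u_1}$ is an ancestor of (or equal to) $p'$, you get $A^\ast = A_{u_1}$, but $\mathrm{depth}(A_{u_1})$ is only the \emph{common} prefix length of \emph{all} $\phi$-images, whereas the answer is the \emph{maximum} over individual $\phi$-images of $\mathrm{lcp}(\phi(v),P[\ell',m])$, which can be strictly larger. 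Concretely, if the $\phi$-images below $u_1$ are $\{\mathtt{abc}\ldots,\mathtt{abd}\ldots,\mathtt{xyz}\ldots\}$ and $P[\ell',m]$ starts with $\mathtt{abd}$, then $A_{u_1}$ is the root of $T(\Suff')$, so $A^\ast$ is the root and you report depth $\delta$, while the true answer is at least $\delta+3$ via the leaf with $\phi$-image $\mathtt{abd}\ldots$. A single LCA summary of the $\phi$-image set simply does not carry enough information; no $\Oh(1)$-time post-processing can recover the per-element maximum from it.

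The paper avoids this by storing, at each leaf of $T_{heads}$ (equivalently, for each class of $\tau$-leaves sharing a head), a predecessor structure on the \emph{ranks} in $\Suff'$ of the corresponding $\phi$-images. Given $p'$, the rank of the predecessor/successor of $P[\ell',m]$ in $\Suff'$ is available in $\Oh(1)$ time from the stored leftmost/rightmost-leaf pointers; a predecessor query then yields the predecessor $\pred$ and successor $\successor$ of $P[\ell',m]$ \emph{within the relevant $\phi$-image set} in $\Oh(\log d)$ time, and $\mathrm{lcp}(P[\ell',m],\pred)$ and $\mathrm{lcp}(P[\ell',m],\successor)$ follow from LCA queries in $T(\Suff')$ in $\Oh(1)$ time. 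The correctness then rests on the standard observation that, in the lexicographic order of $\Suff'$, the maximum LCP of $P[\ell',m]$ with a subset is attained at that subset's predecessor or successor of $P[\ell',m]$. Replacing your $A_u$ values with these per-leaf predecessor structures (which still fit in $\Oh(wd\log^k d)$ bits by Lemma~\ref{lm:kerrataspace}) would repair the argument.
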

\begin{proof}
At the preprocessing step, we traverse $T(\Suff')$ and remember the leftmost and the rightmost leaves in each of its subtrees. We also remember the neighbours of each leaf in the left-to-right order, and finally we preprocess the trie for lowest common ancestor queries. As a second step we preprocess each trie of the $k$-errata tree for lowest common ancestor and weighted level ancestor queries. We also build the following data structure for each trie of the $k$-errata tree. For each string $S$ in the trie, let $S = p S'$, where $S'$ is the longest sampled suffix of $S$. We call $p$ a head of $S$, and define the rank of $S$ to be the rank of $S'$ in $\Suff'$. We build a compact trie $T_{heads}$ containing the heads of all the strings, and preprocess it as in Lemma~\ref{lm:packedtrie}. If $T_{heads}$ contains $x$ strings, we use $\Oh(wx)$ bits of space for the preprocessing, i.e. $\Oh(w d \log^k d)$ bits of space in total. We also associate a predecessor data structure with each of its leaves. The predecessor data structure of a leaf labelled by $p$ contains the ranks of all the strings such that their head is equal to $p$. The predecessor data structures occupy $\Oh(w d \log^k d)$ bits of space in total as well.

Suppose we are to answer a rooted \PrefixSearch query for a string $Q = P[\ell,m]$ and a trie $\tau$ of the $k$-errata tree. Let $T_{heads}$ be the compact trie containing the heads of the strings in $\tau$. By Fact~\ref{fct:prefixsearch}, the length of the heads is $(\ell'-\ell)$.  We first read $P[\ell,\ell'-1]$ in blocks of $w / \log \sigma$ letters in $\Oh(\log d)$ time, and run a \PrefixSearch for it in $T_{heads}$ in $\Oh(\log d)$ time. If the \PrefixSearch terminates in a position $u$ of $T_{heads}$ that is not in a leaf, it remains to find the position corresponding to $u$ in $\tau$, which we can do with one weighted level ancestor query.

Assume now that the \PrefixSearch terminates in a leaf of $T_{heads}$. By the condition of the lemma, we know the answer to the rooted \PrefixSearch for $P[\ell',m]$ in $T(\Suff')$. We also store the leftmost and the rightmost leaves in each subtree of $T(\Suff')$, and therefore can find the predecessor of $P[\ell',m]$ in $\Suff'$ in $\Oh(1)$ time. We use the predecessor data structure associated with the leaf to find the predecessor $\pred$ and successor $\successor$ of $P[\ell',m]$ in $\Oh(\log d)$ time. To find the position where the \PrefixSearch for $P$ terminates, we compute the lengths $\ell_p, \ell_s$ of the longest common prefix of $P[\ell',m]$ and $\pred$ and of $P[\ell',m]$ and $\successor$. We can compute the longest common prefix of $P[\ell',m]$ and $\pred$ (which is a sampled suffix of a dictionary string) in $\Oh(1)$ time via a lowest common ancestor query on $T(\Suff')$. We then compute $\ell_s$ in a similar way.  If $\ell_p = \ell_s$, we return the lowest common ancestor of $\pred$ and $\successor$ as the answer. If $\ell_p > \ell_s$, then the answer is the ancestor of $P[\ell,\ell'-1] \circ \pred$ such that the length of its label is $(\ell'-\ell)+\ell_p$, and we can find it by one weighted level ancestor query. The case $\ell_s > \ell_p$ is analogous.
\end{proof}

\begin{lemma}\label{lm:unrootedLCP-2}
Assume $k \ge 1$. After $\Oh(md / \log d + w d \log^{k} d)$ bits of space preprocessing, we can answer an unrooted \PrefixSearch query for a string $Q = P[\ell,m]$ by reducing it to a rooted \PrefixSearch query in $\Oh(\log d)$ time given the answer to a rooted \PrefixSearch query for a string $P[\ell',m]$ in $T(\Suff')$, where $\ell' \ge \ell$ is the smallest multiple of $w \log d / \log \sigma$.
\end{lemma}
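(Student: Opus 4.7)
The plan is to closely parallel the proof of Lemma~\ref{lm:unrootedLCP}, replacing every use of $T(\Suff)$ by $T(\Suff')$ combined with the machinery introduced in the proof of Lemma~\ref{lm:rootedLCP-2}. Recall that in Lemma~\ref{lm:unrootedLCP} an unrooted query at position $u$ of a trie $\tau$ of the $k$-errata tree is reduced, in $\Oh(\log d)$ time, to a rooted \PrefixSearch for a suffix $Q' = P[\ell'',m]$ of $Q$ in some trie $\tau'$ of a $(k-1)$-errata tree; this reduction uses only auxiliary structures on the tries of the $k$-errata tree together with the supplied rooted answer in $T(\Suff)$, which is used to transfer a position from $\tau$ to $\tau'$.

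First I would preprocess every trie of the $k$-errata tree for its heavy-path decomposition, for the weight-balanced tree $WBT(\cdot)$ pointers, and for lowest common ancestor and weighted level ancestor queries (Lemmas~\ref{lm:lca} and~\ref{lm:wla}). By Lemma~\ref{lm:kerrataspace} the total number of nodes across all tries is $\Oh(d\log^{k}d)$, so these structures fit into $\Oh(wd\log^{k}d)$ bits, well within the budget. For each trie $\tau$ of the $k$-errata tree I would also keep the compact trie $T_{heads}$ of heads defined in the proof of Lemma~\ref{lm:rootedLCP-2}, equipped with its dictionaries, predecessor structures, and with a pointer from each of its leaves into the corresponding leaf of $T(\Suff')$; by the same counting argument this adds only $\Oh(wd\log^{k}d)$ bits.

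The core step is to emulate, within our space budget, the position transfer that originally relied on $T(\Suff)$. I would use the supplied rooted \PrefixSearch answer for $P[\ell',m]$ in $T(\Suff')$ to find the predecessor sampled suffix $S \in \Suff'$ below the returned position, combine $S$ with a stored dictionary-string pointer at a leaf of $\tau$ to identify the sampled suffix of $T(\Suff')$ that should label the corresponding leaf of $\tau'$, and finally issue a single weighted level ancestor query on $\tau'$ to land at the desired starting position. By Fact~\ref{fct:prefixsearch}, all suffixes stored in $\tau'$ have a prescribed common length, so this translation is well-defined.

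The main obstacle is handling the offset between $\ell$, $\ell'$, and the starting position $\ell''$ of the rooted query produced by the reduction, which need not be a multiple of $w\log d/\log\sigma$. I would resolve this exactly as in Lemma~\ref{lm:rootedLCP-2}: read the intervening $\Oh(\log d)$ packed letters of $P$ using Lemma~\ref{lm:entropy_scheme}, descend through $T_{heads}$ of $\tau'$ block by block in $\Oh(\log d)$ time, and, if the descent terminates at a leaf, use a single weighted level ancestor query on $\tau'$ to finalize the starting position of the rooted query. Every sub-step is either constant-time or $\Oh(\log d)$-time, so the entire reduction runs in $\Oh(\log d)$ time, matching the claim.
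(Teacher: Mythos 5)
Your proposal correctly identifies the general scaffolding (reuse the unrooted reduction from Lemma~\ref{lm:unrootedLCP}, replace $T(\Suff)$ by $T(\Suff')$, and note there is an alignment offset to deal with), but it misses the one idea that actually makes the reduction work, and it tries to handle the offset in the wrong place.

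The crux of the unrooted reduction is computing, in $\Oh(\log d)$ time, the length of the longest common prefix of $Q = P[\ell,m]$ with the label $S$ of the heavy path of $\tau$ below $u$. The paper exploits the fact that $S$ itself is a suffix of a dictionary string starting at position $\ell$: if $\ell = \ell'$ then $S$ is a \emph{sampled} suffix, so $S$ is a leaf of $T(\Suff')$ and the LCP with $P[\ell',m]$ is obtained by a single LCA query in $T(\Suff')$ against the position where $P[\ell',m]$'s search ended; if $\ell' > \ell$ one first compares $P[\ell,\ell'-1]$ against the first $\ell'-\ell$ letters of $S$ (read from the dictionary string via Lemma~\ref{lm:entropy_scheme}) block by block in $\Oh(\log d)$ time, and then applies the LCA step from position $\ell'$ onward. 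Your ``position transfer'' paragraph does not do this: the ``predecessor sampled suffix below the returned position'' is some other string in $\Suff'$, not the heavy-path label $S$, and nothing in that paragraph produces the LCP of $Q$ with $S$. Without that LCP you cannot locate where the search departs from the heavy path, which is precisely the output of the reduction.

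Your treatment of the offset is also misplaced. You propose to ``descend through $T_{heads}$ of $\tau'$'' as part of the reduction, but $T_{heads}$ of $\tau'$ is the machinery for answering a \emph{rooted} query in $\tau'$; it belongs to the later invocation of Lemma~\ref{lm:rootedLCP-2}, not to the unrooted-to-rooted reduction. The misalignment that the reduction must absorb lives inside $\tau$, along the heavy path hanging below $u$, where the label starts at the non-sampled position $\ell$; that is why the paper does the $\Oh(\log d)$ block-by-block comparison there. (As a small side point, $P$ is handed to us already packed, so Lemma~\ref{lm:entropy_scheme} is needed only to read the corresponding block of the dictionary string labelling the heavy path, not to read $P$.) You also do not mention the final weighted-level-ancestor jump back into $\tau$ once the rooted answer in $\tau'$ is known, which the statement implicitly requires so that the unrooted query in $\tau$ is actually answered.
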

\begin{proof}
During the preprocessing step, we preprocess $T(\Suff')$ for lowest common ancestor queries and each trie of the $k$-errata tree for weighted level ancestor queries. Let $u$ be the position in a trie $\tau$ where we start the \PrefixSearch for $P[\ell,m]$. The search path for $P[\ell,m]$ traverses a number of heavy paths. The first path is the path containing $u$. Let $S$ be the label of the part of the path starting from $u$. We consider two cases. Suppose first that $\ell' = \ell$. In this case, $S$ is a suffix of one of the dictionary strings starting at a position $\ell'$, i.e. it is sampled. Therefore, we can find the longest common prefix of $P[\ell',m]$ and $S$ using one lowest common ancestor query on $T(\Suff')$. We can then find the node in the path corresponding to this longest common prefix using one weighted level ancestor query. From there, we can find the starting node of the second heavy path traversed by $P[\ell,m]$ in $\Oh(1)$ time. It remains to answer a rooted \PrefixSearch query in the subtree rooted at this node, which is a trie of a $(k-1)$-errata tree by construction. When we know the answer for this \PrefixSearch, we can go back to $\tau$ using one weighted level ancestor query. In the second case $\ell' > \ell$. We start by comparing $P[\ell+1,m]$ and $S$ by blocks of $w/\log \sigma$ letters until we reach the start of a sampled suffix, and then proceed as above. 
\end{proof}

Suppose that we are to answer a dictionary look-up with $k$ mismatches for a string $P$. Our algorithm traverses the $k$-errata tree and generates rooted \PrefixSearch queries for the suffixes of $P$ in $T(\Suff')$. We maintain a priority queue. Each time we need an answer to a rooted \PrefixSearch for a suffix $Q = P[i,m]$ in $T(\Suff')$, we add $Q$ to the priority queue. At each step we extract the longest string from the queue and answer the \PrefixSearch query for it. Since the maximal length of suffixes in the queue cannot increase, we can assume that we must answer \PrefixSearch queries for the suffixes of $P$ starting at positions $i_1 \le i_2 \le \dots \le i_z$, where $z = \Oh(\log^k d)$. Moreover, for each $j$ the position $i_j$ is a multiple of $w \log d / \log \sigma$. We preprocess $T(\Suff')$ as in Lemma~\ref{lm:packedtrie}, which requires $\Oh(md /\log d) = o(md)$ bits of space. We first run \PrefixSearch for $P[i_1,m]$ in $\Oh(m/w)$ time. Suppose it follows the path labelled by $P[i_1, \ell_1]$. Let $S \in \Suff'$ be an arbitrary string that ends below the end of this path. We then find the leaf corresponding to $S[i_2-i_1,|S|]$. By construction of $T(\Suff')$ and because $i_2-i_1$ is a multiple of $w \log d / \log \sigma$, such a leaf must exist. We then use a weighted level ancestor query to find the end of the path labelled by $P[i_2, \ell'_1]$, where $P[i_2, \ell'_1]$ is the longest suffix of $P[i_2, \ell_1]$ such that its length is a multiple of $w / \log \sigma$, and continue the \PrefixSearch for $P[i_2,m]$ from there, and so on. The total running time is $\Oh(m/w + z\log d) = \Oh(m/w + \log^{k+1} d)$. 

If $d = \Omega(1)$ as we assumed earlier, lemmas~\ref{lm:entropy_scheme},~\ref{lm:rootedLCP-2},~\ref{lm:unrootedLCP-2}, and the discussion above give Theorem~\ref{th:entropy}.
	
\section{Removing extra logarithm from the time complexity}
	\label{sec:removeloglog}
	In this section we improve the query time to $\Oh(m/w + \log^k d + occ)$ and show our final result.

\begin{theorem}\label{th:main}
Assume a constant-size alphabet. For a dictionary of $d > 2$ strings of lengths $m$ and for any $q = o(\log md)$, let $H_q$ be the $q$-th empirical entropy of the concatenation of all strings in the dictionary. There exists a data structure for dictionary look-ups with $k$ mismatches that uses $2 md H_q + o(md) +  \Oh(wd \log^{k} d)$ bits of space and has query time $\Oh(m/w + \log^k d + occ)$, where $w = \Theta(\log md)$ is the size of a machine word.
\end{theorem}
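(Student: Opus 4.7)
The plan is to take the data structure of Theorem~\ref{th:entropy} and replace every $\Oh(\log d)$-time subroutine invoked inside a \PrefixSearch call (weighted level ancestor queries, predecessor queries inside $T_{heads}$-leaves, navigation inside micro-trees) by an $\Oh(1)$-time Karp--Rabin fingerprint lookup. Since the number of \PrefixSearch queries is $\Oh(\log^k d)$ by Lemma~\ref{lm:kerratatime}, this drops the total cost from $\Oh(\log^{k+1}d)$ to $\Oh(\log^k d)$ and matches the desired query time once amortised against a single $\Oh(m/w)$-time preprocessing of the query.

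First, immediately after $P$ arrives in packed form I would compute, in $\Oh(m/w)$ time, the Karp--Rabin fingerprint of every prefix of $P$, together with a table of powers of the base modulo the chosen prime. With this table in hand, the fingerprint of any substring $P[i,j]$, and in particular of any suffix of $P$ that our algorithm ever needs, is available in $\Oh(1)$ time. Because the \PrefixSearch queries produced by the traversal of the $k$-errata tree are all for suffixes of $P$ (by our fix to the $k$-errata tree and Fact~\ref{fct:prefixsearch}), this single preprocessing covers every comparison performed by every one of the $\Oh(\log^k d)$ queries.

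Next, for each trie of the $k$-errata tree and for $T(\Suff')$, I would augment the preprocessing of Lemmas~\ref{lm:rootedLCP-2} and~\ref{lm:unrootedLCP-2} with a static dictionary (Lemma~\ref{lm:dictionary}) mapping the fingerprint of the label of each relevant node to the node itself, using $\Oh(w)$ bits per node and therefore $\Oh(wd\log^k d)$ bits in total. Once such a dictionary is available, the weighted level ancestor queries of Lemma~\ref{lm:wla} that formerly cost $\Oh(\log d)$ are replaced by a single $\Oh(1)$ fingerprint lookup: the target depth is known, hence so is the fingerprint of the corresponding prefix of $P$, hence so is the node. The predecessor queries inside each leaf of $T_{heads}$ likewise become $\Oh(1)$ lookups because the sampled suffix of $P$ we are searching for has a known starting position in $P$ and a precomputed fingerprint, so we either find it exactly, or we locate the \PrefixSearch endpoint with one additional constant-time block comparison on $P$ and the bitwise-XOR trick from Lemma~\ref{lm:packedtrie}. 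The micro-tree navigation of Lemma~\ref{lm:packedtrie} inside $T(\Suff')$ is adapted in the same way, so that each boundary step is resolved by one fingerprint lookup rather than a walk down a balanced search tree.

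Finally, I would derandomise the choice of Karp--Rabin parameters: the total number of stored fingerprinted strings is $N = \Oh(d\log^k d)$, all of them substrings of dictionary strings, so the number of pairs that could collide is $\Oh(N^2)$, and a standard counting argument shows that a prime of $\Oh(w)$ bits avoiding all of these collisions can be found deterministically at preprocessing time and hard-wired into the structure; this adds only $\Oh(w)$ bits of space and preserves determinism of queries. The main obstacle I anticipate is handling the case where a \PrefixSearch terminates in the middle of an edge rather than at a node, since fingerprints identify only labels of complete nodes. I would resolve this exactly as in Lemma~\ref{lm:packedtrie}: once fingerprint equalities have pinned down the last fully-matched boundary node, the mismatch position on the next edge is found with one block XOR and one most-significant-bit query in $\Oh(1)$ time, and the resulting position in the original trie of the $k$-errata tree is recovered by a single weighted level ancestor query that is itself now $\Oh(1)$ via the fingerprint dictionary. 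Combining these replacements with Lemmas~\ref{lm:entropy_scheme},~\ref{lm:rootedLCP-2}, and~\ref{lm:unrootedLCP-2} yields Theorem~\ref{th:main}, where the extra $md H_q$ term comes from the fact that the derandomised fingerprints are stored in a second copy of the dictionary access structure of Lemma~\ref{lm:entropy_scheme}.
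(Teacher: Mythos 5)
Your high-level instinct, that Karp--Rabin fingerprints can eliminate the extra $\log d$ factor, matches the paper, but the mechanism you describe has a fatal gap precisely at the step you treat most casually: derandomisation. Choosing a prime at preprocessing time so that the $\Oh(d\log^k d)$ \emph{stored} fingerprints are pairwise distinct does not make queries deterministic, because the query suffix $P[i_j,m]$ is unknown at preprocessing time and may collide with a stored leaf label that is a \emph{different} string. Once such a false positive occurs you would output a wrong dictionary string. The paper resolves this with a query-time verification that you have no counterpart for: it fingerprints the \emph{reverses} of suffixes of $P$, builds a compact trie over the reverses of the dictionary strings, locates $P^R$ in that trie once in $\Oh(m/w+\log d)$ time, and then checks each candidate leaf deterministically by one LCA query (giving $\mathrm{lcp}(P^R,S^R)$ in $\Oh(1)$). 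This is exactly why the space bound is $2\,md H_q$ and not $md H_q$: the second copy is the Ferragina--Venturini representation of the reversed text needed to support this verification, not ``a second copy for the fingerprints,'' as you claim.

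Two further discrepancies are worth noting. First, the paper does not try to make every \PrefixSearch cost $\Oh(1)$; it observes that only the $\Oh(\log^k d)$ $0$-mismatch lookups at the bottom two recursion levels need that speed, and those are \emph{exact membership} queries whose answer, if it exists, is always a leaf (by Fact~\ref{fct:prefixsearch} all strings in a trie have the same length), so a single fingerprint dictionary lookup suffices. Your plan to replace weighted level ancestor and predecessor subroutines inside general \PrefixSearch calls by fingerprint lookups breaks down whenever the target position lies on an edge rather than at a node, and the ``boundary node plus one block XOR'' patch does not recover it: the true endpoint can be arbitrarily far from the nearest boundary node that actually is a trie node, and you have stored no fingerprint for edge positions. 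Second, computing the fingerprint of \emph{every} prefix of $P$ takes $\Theta(m)$ time, not $\Oh(m/w)$, which would already blow the budget; the paper instead computes only the $z=\Oh(\log^k d)$ fingerprints it needs, scanning a length-$m$ bitvector of requested positions in $\Oh(m/w+z)$ time and extending the running fingerprint word by word, which also sidesteps the alignment issue (the packed-word fingerprint of $P[i,m]$ for non-aligned $i$ is handled by zero-padding, not by a table of per-character powers).
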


As explained in Theorem~\ref{th:entropy}, we can assume $d = \Omega(1)$. Recall that the dictionary look-up with $k$ mismatches is run recursively. The first $(k-2)$ levels of recursion require $\Oh(\log^{k-2} d)$ \PrefixSearch queries and can be implemented in $\Oh(m/w+\log^{k-1} d)$ time. Therefore, it suffices to improve the runtime of the two last levels of the recursion, where we must perform a batch $\Oh(\log^{k-1} d)$ dictionary look-up queries with one mismatch. To achieve the desired complexity we use the fact that the queries are related, as explained below.

\subparagraph{Preprocessing.} For a string $S = s_1 s_2 \dots s_m$ we define its reverse $S^R = s_m \dots s_2 s_1$. First, we build a compact trie on the reverses of all the dictionary strings and preprocess it as described in Lemma~\ref{lm:packedtrie}, which takes $\Oh(w d)$ bits of space. We store the reverses using the Ferragina-Venturini representation (Lemma~\ref{lm:entropy_scheme}) in $H'_q md + o(md)$ bits of space, where $H'_q$ is the $q$-th empirical entropy of the reverse of the concatenation of all the strings in the dictionary. By~\cite[Theorem A.3]{Ferragina:2005:ICT:1082036.1082039}, $H'_q md + o(md) = H_q md + o(md)$. For the second step, we need Karp-Rabin fingerprints. We modify the standard definition as we work with packed strings.

\begin{definition}[Karp-Rabin fingerprints~\cite{KarpRabin}]
Consider a string $S$ and its packed representation $ w_1 w_2 \dots w_z$, where each $w_i$ is a machine word. (If $|S|$ is not a multiple of $w/\log\sigma$, we append an appropriate number of zeros.) The Karp–Rabin fingerprint of $S$ is defined as $\varphi = \sum_{i = 1}^z w_i \cdot r^{z-i} \bmod p$, where $p$ is a fixed prime number and $r$ is a randomly chosen integer in $[0,p-1]$.
\end{definition}

From the definition it follows that if the strings are equal, their fingerprints are equal. Furthermore, it is well-known that for any $c > 3$ and $p > (\max\{m/w, d \log^k d\})^c$, the probability of two distinct strings of length $z w \le \max\{m, w d \log^k d\}$ having the same fingerprint (\emph{collision probability}) is less than $1/(\max\{m/w, d \log^k d\})^{c-1}$. Consider a trie $\tau$ of the $k$-errata tree. By definition, the lengths of the leaf labels in $\tau$ is at most $m$. From the bound on the collision probability it follows that we can choose $p$ and $r$ so that the fingerprints of the reverses of these labels are distinct. For each leaf of $\tau$, we compute the Karp-Rabin fingerprint of the reverse of its label and add it to a dictionary (Lemma~\ref{lm:dictionary}) associated with $\tau$. Also, using the same $p$ and $r$, we compute Karp-Rabin fingerprints corresponding to inner nodes of the tries of the $k$-errata tree. Namely, consider one of such nodes, and let $S$ be its label and $\delta$ be the length of the strings in the trie. We take the reverse of $S$, prepend it with $(\delta-|S|) \bmod w/\log\sigma$ zeros, and compute the Karp-Rabin fingerprint of the resulting string.

\subparagraph{Queries.} 
We must run $\Oh(\log^{k-1} d)$ dictionary look-up queries with one mismatch. Consider one of these queries, let it be a query for a string $Q$ (which must be a suffix of $P$) in a trie $\tau$ and recall the algorithm of Section~\ref{sec:k-errata}. First, we run a \PrefixSearch to find the longest path $\pi$ that is labelled by a prefix of $Q$. For this step we can use $T(\Suff')$, as the total number of such queries is $\Oh(\log^{k-1} d)$ and therefore we can spend $\Oh(\log d)$ time per each of them. Suppose that $\pi$ traverses the heavy paths $H_1, H_2, \dots, H_j$ and leaves the heavy path $H_i$ at a position $u_i$. We can find the positions $u_j$ in $\Oh(\log d)$ time once we have found the end of $\pi$. The rest of the algorithm can be described as follows. First, we must perform dictionary look-ups with $0$ mismatches (i.e., \PrefixSearch) in $\Oh(\log d)$ vertical and $\Oh(\log d)$ horizontal $0$-errata trees (that are tries of the $k$-errata tree by definition). Second, for each $1 \le i < j$, we must perform a dictionary look-up with $0$ mismatches (\PrefixSearch) from a position $u'_i$ that follows $u_i$ in the heavy path $H_i$. Importantly, each $u_i$ is a node. Finally, we must perform a a dictionary look-up with $0$ mismatches (\PrefixSearch) from a position $u'_j$ that follows $u_j$ in the heavy path $H_j$.

We note that to perform the \PrefixSearch from the position $u'_j$ we can use $T(\Suff')$, as before, because the total number of such \PrefixSearch operations is $\Oh(\log^{k-1} d)$. We now explain how we perform the \PrefixSearch operations in vertical and horizontal $0$-errata trees, as well as the \PrefixSearch operations from nodes $u'_i$, $1 \le i < j$.  In total, we must perform $\Oh(\log^k d)$ such operations, and for each of them the query string is a suffix of $P$. Let $P[i_1,m], P[i_2,m], \ldots, P[i_z,m]$, $z = \Oh(\log^k d)$ be the suffixes of $P$ for which we are to run a \PrefixSearch. We create a bitvector of length $m = o(md)$ where each $i_j^{th}$ bit is set. We then compute the Karp-Rabin fingerprints of the reverses of  $P[i_1,m], P[i_2,m], \ldots, P[i_z,m]$ in $\Oh(m/w + z)$ time using the following fact. 

\begin{fact}\label{fact:fingerprint}
Given the Karp-Rabin fingerprints of $X$ and $Y$, where the length of $X$ is a multiple of $w/\log \sigma$, we can compute the Karp-Rabin fingerprint of their concatenation, $XY$ in $\Oh(1)$ time.
\end{fact}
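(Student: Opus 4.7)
The plan is to unpack the definition and notice that the alignment assumption makes the packed representation of $XY$ exactly the concatenation of the packed representations of $X$ and $Y$, so the fingerprint of $XY$ splits cleanly into two Horner-style sums that can be combined arithmetically.

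More precisely, write $X$ in packed form as $x_1 x_2 \cdots x_a$ (where $a = |X|\log\sigma / w$ is an integer by hypothesis) and $Y$ as $y_1 y_2 \cdots y_b$ (padding with zeros if necessary). Since $|X|$ is a multiple of $w/\log\sigma$, no cross-word bit-shift is required: the packed representation of $XY$ is literally $x_1 \cdots x_a y_1 \cdots y_b$. Plugging this into the defining sum gives
\[
\varphi(XY) = \sum_{i=1}^{a} x_i \, r^{a+b-i} + \sum_{j=1}^{b} y_j \, r^{b-j} \equiv \varphi(X) \cdot r^{b} + \varphi(Y) \pmod{p}.
\]
Thus, provided we know $r^b \bmod p$, the combined fingerprint is obtained from one multiplication, one addition, and one modular reduction, all on $\Oh(\log p) = \Oh(w)$-bit integers and hence in constant time in the RAM model.

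The only thing left to arrange is constant-time access to $r^b \bmod p$. The plan is to precompute a table of the powers $r^0, r^1, \ldots, r^{N} \bmod p$ where $N$ is the maximum number of machine-word blocks that ever occurs in a fingerprint computation in our data structure (at most $\Oh(md/w)$). Building this table takes $\Oh(N)$ multiplications and $\Oh(wN)$ bits of storage, which is absorbed by the $o(md)$ slack in the space bound of Theorem~\ref{th:main}; alternatively, since the only relevant lengths $b$ are determined by the algorithm, we can store the required powers on demand.

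I do not anticipate a real obstacle: the alignment hypothesis kills the one genuinely awkward case (mid-word boundaries, which would otherwise force us to shift $Y$'s first word and split its contribution across two blocks). The only care point is to make the precomputation of powers fit inside the declared space budget and to observe that the fingerprint arithmetic is carried out modulo a prime of $\Oh(w)$ bits, so each step is indeed $\Oh(1)$ in the word RAM.
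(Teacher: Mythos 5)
The paper states this fact without a proof (treating it as folklore), so there is no paper proof to compare against. Your argument is correct and is the standard one: the hypothesis that $|X|$ is a multiple of $w/\log\sigma$ means the packed representation of $XY$ is exactly the machine words of $X$ followed by those of $Y$, which yields $\varphi(XY) \equiv \varphi(X)\cdot r^b + \varphi(Y) \pmod{p}$ directly from the defining Horner-style sum (here $b$ is the number of blocks of $Y$, including a possibly zero-padded last block, and this padding is identical for $Y$ and $XY$). Supplying $r^b \bmod p$ in $\Oh(1)$ time via a precomputed table — or, in the paper's actual uses of this fact, by maintaining the current power of $r$ incrementally as blocks are appended, or storing it alongside each precomputed node fingerprint — completes the constant-time bound; either variant fits the space budget. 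You also correctly identify that the alignment hypothesis is what removes the otherwise unavoidable cross-word shifts.
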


We iterate over all blocks of the bitvector starting from the last one and maintain the Karp-Rabin fingerprint of the reverse of the suffix of $P$ that starts at the current position. When we start a new block, we update the Karp-Rabin fingerprint. If a block contains set bits (which we can decide in constant time), we extract the positions of all set bits in $\Oh(1)$ time per bit using the technique of~\cite{bitwise}, and compute the corresponding Karp-Rabin fingerprints. Also, as a preliminary step, we run a \PrefixSearch for $P^R$ in the compact trie on the reverses of the dictionary strings in $\Oh(m/w + \log d)$ time. Let $u$ be the position where this \PrefixSearch terminates. 

\PrefixSearch in vertical and horizontal $0$-errata trees. Assume we must answer a \PrefixSearch for $P[i_j,m]$ on a tree $\tau$. We search the fingerprint of the reverse of $P[i_j,m]$ in the dictionary associated with $\tau$. The search will return at most one leaf of the tree. We know that its label is equal to $P[i_j,m]$ with high probability, but we need a deterministic answer. We test the leaf as follows.  Let $S$ be one of the dictionary strings such that its id is stored at the leaf. We find the leaf $v$ of the compact tree on the reverses of the dictionary strings that corresponds to the reverse $S^R$ of $S$. Now, we can compute the length of the longest common prefix of $P^R$ and $S^R$ in constant time via a lowest common ancestor query for $u$ and $v$ and check if it is indeed equal or larger than $|P[i_j,m]|$. 

\PrefixSearch from $u'_i$, $1 \le i < j$. This step is equivalent to the following: Find all the strings in the trie that start with a label of $u'_i$ and end with a given suffix of $P$. We can compute the Karp-Rabin fingerprints of the reverses of the strings that we are looking for as follows. Positions $u_i$ are necessarily nodes and we store the Karp-Rabin fingerprints of the reverses of their labels. Recall that if $S_i$ was the label of $u_i$, we prepended the reverse $S_i^R$ of $S_i$ with $(\delta-|S|) \bmod w/\log \sigma$ zeros, where $\delta$ is the length of the strings in the trie containing $u_i$. It follows that we can compute the fingerprint $\varphi_i$ of the reverse of the label of $u'_i$ prepended with $(\delta-|S_i|-1) \bmod w/\log\sigma$ zeros in $\Oh(1)$ time. Knowing $\varphi_i$ and the fingerprint of the reverse of the suffix of $P$, we can compute the fingerprint of the strings we are searching for in constant time. We note that prepending with zeros is necessary in order to align the borders of the blocks in the reverse of the label of $u'_i$ and the reverse of the label of the suffix of $P$. We finish the computation as above, that is we find a leaf such that the fingerprint of the reverse of its label is equal to the fingerprint of the strings we are looking for, and test it using the trie on the reverses of the dictionary strings.

\bibliographystyle{plainurl}
\bibliography{main}

\appendix
\section{Missing proofs}\label{sec:appendix}
Let the weight of a node $u$ of a weight-balanced tree be the sum of the weights of the leaves in the subtree rooted at $u$.

\begin{fact}\label{fact:wbt}
The weights of the nodes in any root-to-leaf path of a weight-balanced tree decrease. Moreover, if $(u, v)$ is an edge in the path, and $v$ is not a leaf, then the weight of $v$ is at least two times smaller than the weight of $u$. 
\end{fact}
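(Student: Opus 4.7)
The plan is to unfold the recursive definition at an internal node $u$ and verify the two claims by a direct calculation on the weights of its three children; iterating along the path then gives the fact. Let $u$ be a node of the weight-balanced tree, and suppose the subtree rooted at $u$ is built from leaves of weights $w_1, w_2, \dots, w_h$ (so that the weight of $u$ is $W = w_1 + \dots + w_h$). Let $\mu$ be the splitting index chosen in the construction, i.e.\ the smallest index with $w_1 + \dots + w_\mu > W/2$.

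First I would bound the weights of the three children of $u$. The left child has weight $W_L = w_1 + \dots + w_{\mu-1}$, and by the minimality of $\mu$ we have $W_L \le W/2$. The right child has weight $W_R = w_{\mu+1} + \dots + w_h = W - (w_1 + \dots + w_\mu) < W - W/2 = W/2$, where the strict inequality uses the defining condition on $\mu$. The middle child has weight $w_\mu$ and, by construction, is a leaf. Since the second clause of the fact explicitly excludes the case where $v$ is a leaf, this handles the halving claim: whenever an edge $(u, v)$ has $v$ non-leaf, $v$ is either the left or the right child of $u$ and hence has weight at most $W/2$.

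For the strict decrease in the first clause, I would simply note that all three children have strictly smaller weight than $W$: the left and right children because they are a proper subset of the leaves (which have positive weight), and the middle child because $w_\mu < W$ as long as some other weight is present on the path (the degenerate case $h = 1$ being trivial, since then the tree under $u$ is a single leaf and there is no outgoing edge). Applying this argument to every edge along a root-to-leaf path finishes the proof.

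This is essentially a one-line verification from the definition, so I do not expect any real obstacle; the only mild care is to remember that the middle subtree always consists of a single leaf, which is why the halving statement is phrased only for non-leaf children, and to check the edge case where a subtree contains just one leaf.
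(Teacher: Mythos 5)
Your proof is correct. The paper states this fact without any proof, so there is no argument to compare against; your direct verification from the recursive definition is exactly what the fact requires. The key points — the left child's weight is $\le W/2$ by minimality of $\mu$, the right child's weight is $< W/2$ by the defining inequality for $\mu$, and the middle child is always a single leaf so the halving clause vacuously excludes it — are all correctly identified. For the strict-decrease clause your observation that $w_\mu > 0$ (which follows from $w_1 + \dots + w_\mu > W/2 \ge w_1 + \dots + w_{\mu-1}$) is precisely what makes the left child's weight strictly less than $W$, and the remaining cases are immediate.
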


The proofs of the following lemmas follow closely the proofs given by Cole, Gottlieb, and Lewenstein~\cite{kerratatree}, and we provide them only for completeness. We note that the upper bounds in~\cite{kerratatree} are more precise, but the ones below suffice for our purposes.

\kerrataspace*
\begin{proof}
Let $S_k(d)$ denote the number of strings in the tries of the $k$-errata tree for a dictionary of size $d$. We will show by induction on $k$ that $S_k(d)$ can be upper-bounded by 

$$\tilde{S}_k(d') =  2 \cdot 4^{k} d'  {\log d' + k \choose \log d'} - d'$$
where $d'$ is the nearest power of two larger than $d$. The right-hand side of the inequality is $\Oh(d \log^k d)$, which will give the claim of the lemma. Indeed, using the inequality ${n \choose k} \leq (en/k)^{k}$ we can upper bound the right-hand side by $c^k d (\log d + k)^k / k^k$ for some constant $c$. For $d$ large enough, $\log d \ge 2c$ so $c^k d (\log d + k)^k / k^k \le d \log^{k} d \cdot  (c/k + 1/2)^{k} = \Oh(d\log^{k} d)$.

For $k = 0$ we have $S_k(d) = d$ as the $k$-errata tree contains only one trie, so the base case holds. Let now $k \geq 1$. Recall that we start the construction of the $k$-errata tree by creating a compact trie on $d$ strings, and then build the $(k-1)$-errata trees recursively. We now estimate the number of strings in the tries of the $(k-1)$-errata trees. For that, we will consider each string $S$ in the dictionary, and will estimate its contribution to the total size of the trees. Let $S$ correspond to a leaf of the trie $T$ such that the path from the root to this leaf passes through heavy paths $H_1, H_2, \dots, H_j$ leaving a heavy path $H_i$ at a node $u_i$. 

For each heavy path we build a number of vertical $(k-1)$-errata trees that contain~$S$. The claim is that there are at most two trees containing at most $d$ strings, at most two trees containing at most $d/2$ strings, etc. Indeed, from the definition of heavy paths it follows that the total weight of the nodes in a heavy path $H_{i+1}$ is at least two times smaller than the total weight of the nodes in a heavy path $H_{i}$. Furthermore, consider the sequence of the vertical $(k-1)$-errata trees containing $S$ in the top-down order. By Fact~\ref{fact:wbt} we obtain that the size of the trees in this sequence must decrease by a factor of at least two each time except for when we arrive to a leaf of a weight-balanced tree corresponding to a heavy path $H_i$, where it can decrease by one. Therefore, each two trees the size decreases by at least a factor of two. The claim follows. 

A similar claim holds for the horizontal $k$-errata trees: among the trees containing $S$, there are at most two trees of size at most $d$, at most two trees of size at most $d/2$, etc. The string $S$ belongs to the horizontal $(k-1)$-errata trees associated with the nodes $u_1, u_2, \dots, u_{j-1}$ only. The total size of the horizontal $(k-1)$-errata trees for $u_i$ decreases by a factor of at least two each time we switch paths (recall that we build these trees for all the children of $u_i$ but the heavy one). Consider the sequence of the horizontal $(k-1)$-errata trees containing $S$ in the top-down order. By Fact~\ref{fact:wbt} we obtain that the size of the trees in this sequence must decrease by a factor of at least two each time except for when we arrive to a leaf of a weight-balanced tree corresponding to a heavy path $H_i$, where it can decrease by one. Therefore, each two trees the size decreases by at least a factor of two, and we obtain the claim as before.

Let $d'$ be the nearest power of two larger than $d$. By the induction hypothesis and because $\tilde{S}_{k-1}(x)/x$ is a non-decreasing function of~$x$, we have

$$S_k(d) \le 4 d' \left[\frac{\tilde{S}_{k-1}(d')}{d'} + \frac{\tilde{S}_{k-1}(d'/2)}{d'/2} + \dots + \frac{\tilde{S}_{k-1}(1)}{1}\right] + d'$$
Plugging in the expression for $\tilde{S}_{k-1}$ and simplifying the sums of binomial coefficients (we use the fact that $\log d'$ is an integer),  we obtain

\begin{align*}
S_k(d) &\le 2 \cdot 4^k d' \big({\log d' + k - 1 \choose \log d'} + {\log d' + k - 2 \choose \log d' - 1} + \ldots + 1 \big) - 4 d' (1+\log d') + d'  \\
&= 2 \cdot 4^k d' {\log d' + k \choose \log d'} - 4 d' (1+\log d') + d'  \le 2 \cdot 4^k d' {\log d' + k \choose \log d'} - d' = \tilde{S}_k(d') \qedhere
\end{align*}
\end{proof}

\kerratatime*
\begin{proof}
Let $T_k(d)$ be the number of \PrefixSearch operations that we run while performing a dictionary look-up with $k$ mismatches for a dictionary of size $d$. We will show by induction on $k$ that $T_k(d)$ can be upper-bounded by

$$\tilde{T}_k(d') = 2 \cdot 9^{k} {\log d' + k \choose \log d'} - 1,$$
where $d'$ is the nearest power of two larger than $d$. We have $T_0(d) = 1$, so the base case holds. Let now $k > 0$. Let $H_1, H_2, \dots, H_j$ be the heavy paths traced by the \PrefixSearch for $P$, and $u_1, u_2, \dots, u_j$ be the positions where the search leaves the paths.

First, we estimate the number of \PrefixSearch operations in vertical $(k-1)$-errata trees that we run for the patterns of Type~\ref{tp:i}. Consider the weight-balanced tree for a heavy path $H_i$, and the path $\pi_i$ from the root of this tree to to $u_i$. The set of nodes covering the part of $H_i$ from its head to $u_i$ is the children of the nodes in $\pi_i$ hanging off to the left. Recall that each node of the weight-balanced tree has at most two children hanging off to the left. Since the weight of each $u_i$ is at least two times larger than the weight of $u_{i+1}$, and because of Fact~\ref{fact:wbt}, we obtain that the number of \PrefixSearch operations in vertical $(k-1)$-errata trees is at most $4 \left[T_{k-1}(d) + T_{k-1}(d/2) + \dots + 1 \right]$.

Analogously, we upper bound the number of \PrefixSearch operations in horizontal $(k-1)$-errata trees that we run for the patterns of Type~\ref{tp:ii} by $4 \left[ T_{k-1}(d) + T_{k-1}(d/2) + \dots + 1 \right]$. 

Finally, we upper bound the number of \PrefixSearch operations that we run in the same $k$-errata tree, i.e. from the nodes following $u_i$ in $H_i$ (these are the operations we run for the patterns of Types~\ref{tp:ii} and~\ref{tp:iii}) by $T_{k-1}(d) + T_{k-1}(d/2) + \dots + 1$. Summarizing, we obtain

$$T_k(d) \le 9 \left[T_{k-1}(d') + T_{k-1}(d'/2) + \dots + 1\right] +1 \le 9 \left[\tilde{T}_{k-1}(d') + \tilde{T}_{k-1}(d'/2) + \dots + 1\right] +1 .$$

Analogously to the space bound, we can show that $T_k(d) \le \tilde{T}_k(d') = \Oh(\log^k d)$. We also spend $\Oh(1)$ extra time for each \PrefixSearch operation to find the starting node of \PrefixSearch, which is $\Oh(\log^k d)$ time in total. Finally, we spend $\Oh(occ)$ time to output the desired dictionary strings.
\end{proof}

\rootedLCP*
\begin{proof}
At the preprocessing step, we traverse $T(\Suff)$ and remember the leftmost and the rightmost leaves in each of its subtrees. We also remember the neighbours of each leaf in the left-to-right order, and finally we preprocess the trie for the lowest common ancestor queries. We then create a predecessor data structure for each trie $\tau$ of the $k$-errata tree. By definition, $\tau$ is built on a subset of $\Suff$, and therefore, the rank of each string in $\tau$ relative to $\Suff$ is well-defined. The predecessor data structure is built on the set of the ranks of the strings in $\tau$. We also preprocess each trie $\tau$ for lowest common ancestor and weighted level ancestor queries. 

Suppose we would like to answer a rooted \PrefixSearch query for a string $Q$ and a trie $\tau$ of the $k$-errata tree. As we store the leftmost and the rightmost leaves in each subtree of $T(\Suff)$, we can find the predecessor and therefore the successor of $Q$ in $\Suff$ in $\Oh(1)$ time. Using the predecessor data structure, we can further find the predecessor $\pred$ and the successor $\successor$ of $Q$ in $\tau$ in $\Oh(\log d)$ time. We then compute the lowest common ancestor $u$ of $\pred$ and $\successor$ in $\Oh(1)$ time. We use two lowest common ancestor queries in $T(\Suff)$ to find the lengths $\ell_p, \ell_s$ of the longest common prefixes of $Q$ and $\pred$ and $Q$ and $\successor$. If $\ell_p = \ell_s$, we return $u$ as the answer. If $\ell_p > \ell_s$, then the answer is the ancestor of $\ell_p$ such that the length of its label is $\ell_p$. We can find it by one weighted level ancestor query. The case $\ell_s > \ell_p$ is analogous.
\end{proof}

\unrootedLCP*
\begin{proof}
The preprocessing step repeats that of Lemma~\ref{lm:rootedLCP}. We also assume to store the first letter on each edge of each trie. The search path for $Q$ traverses a number of heavy paths of~$\tau$. Consider the heavy path of $\tau$ containing $u$, and let $S$ be the label of the part of this heavy path starting from $u$ and up to the last node. If we know the end of the \PrefixSearch query for $Q$ in $T(\Suff)$, we can compute the length $\ell$ of the longest common prefix of $Q$ and $S$ using one lowest common ancestor query on $T(\Suff)$. We can then find the node in the path corresponding to the end of this longest common prefix using one weighted level ancestor query in $\Oh(\log d)$ time. From there, we can find the starting node of the second heavy path traversed by $Q$ in $\Oh(1)$ time. It remains to answer a rooted \PrefixSearch query for $Q' = Q[\ell+1,m]$ on the subtree rooted at this node, which belongs to a $(k-1)$-errata tree by construction. Knowing the answer for this subtree, we can go back to $\tau$ in $\Oh(\log d)$ time via one weighted level ancestor query. 
\end{proof}
\end{document}